\documentclass{sig-alternate}

\usepackage{amsmath}
\usepackage{algorithm2e}
\usepackage{algorithmic}
\usepackage{amsfonts}
\usepackage{appendix}
\newtheorem{thm}{Proposition}
\newtheorem{lma}{Lemma}
\newdef{defn}{Definition}
\newtheorem{cor}{Corollary}
\usepackage{graphicx,color}
\usepackage{caption}
\DeclareCaptionType{copyrightbox}
\usepackage{subcaption}
\usepackage{multirow}
% \usepackage{fullpage}
%\usepackage{authblk}

% \onehalfspacing

\begin{document}

\title{Mathematical Frameworks for Pricing in the Cloud: Revenue, Fairness, and Resource Allocations}

\numberofauthors{1}
\def\sharedaffiliation{%
\end{tabular}
\begin{tabular}{c}}

\author{
\alignauthor Carlee Joe-Wong \hspace{0.5in} Soumya Sen \\
\sharedaffiliation
\affaddr{Princeton University} \\
\affaddr{\{cjoe,soumyas\}@princeton.edu}
}

%\numberofauthors{3}
%
%\author{
%\alignauthor Some \\
%\affaddr{who}\\
%\email{@}
%%
%\alignauthor anonymous \\
%\affaddr{work}\\
%\email{an}
%%
%\alignauthor people \\
%\affaddr{together}\\
%\email{institution}
%}

\maketitle

\begin{abstract}
As more and more users begin to use the cloud for their computing needs, datacenter operators are increasingly pressed to effectively allocate their resources among these client users. Yet while much work has been done in this area, relatively little attention has been paid to studying perhaps the ultimate lever of resource allocation: pricing. Most data centers today charge users by ``bundling'' heterogeneous resources together in a fixed ratio and selling these bundles to their clients. But bundling masks the fact that different users require different combinations of resources (e.g., CPUs, memory, bandwidth) to process their jobs. The presence of multiple resources in fact allows an operator to offer many different types of pricing strategies, which may have different effects on its revenue. Moreover, to avoid user dissatisfaction, operators must consider the impact of their chosen prices on the fairness of the jobs processed for different users. In this paper, we develop an analytical framework that accounts for the fairness and revenue tradeoffs that arise in a datacenter's multi-resource setting and the impact that different pricing plans can have on this tradeoff. We characterize the implications of different pricing plans on various fairness metrics and derive analytical limits on the operator's fairness-revenue tradeoff. We then provide an algorithm to navigate this tradeoff and compare the tradeoff points for different pricing strategies on a data trace taken from a Google cluster.
%in a  within an analytical frameworks that capture all these aspects of cloud pricing are currentl   
%require developing an analytical framework that accounts for the fairness and efficiency tradeoffs that arise in such multi-resource allocation settings, and the impact that pricing schemes can have on this tradeoff.  In this work, 
% We investigate the pricing of bundled and unbundled resources, and show that bundling can lead to a significant loss in revenue. Moreover, we construct a mathematical framework for optimizing fairness and show that this framework recovers previously proposed metrics for the fairness of multi-resource allocation. We then explore the tradeoff between fair prices and those yielding higher revenue for the operator. Finally, we apply results from user surveys on multi-resource fairness to suggest client preferences for different pricing schemes.
\end{abstract}

\category{K.6.2}{Management of Computing and Information Systems}{Installation Management}[Pricing and resource allocation]
\vspace{-0.1in}
\terms{Economics, Management}
\vspace{-0.1in}
\keywords{Pricing, Cloud, Datacenters, Resource allocation}

\section{Introduction}

% Introduce the problem.

\subsection{Cloud Computing Resources}

As Internet connectivity becomes ever more ubiquitous, cloud computing is becoming an increasingly popular alternative to traditional computing paradigms \cite{weinhardt2009cloud}. This dramatic increase in demand has led to several research challenges: how should a datacenter allocate multiple shared resources among users in a practical, scalable manner? The research implications of this question range from developing an efficient method of routing client requests to different datacenters, to allocating a certain amount of resources for processing each job as it arrives at a particular datacenter. In this work, we focus on an ``infrastructure-as-a-service'' view of the cloud, in which the resources considered are raw computing resources such as CPU time or memory. % Perhaps the most pressing question, however, deals with the management of fluctuating user demand for cloud resources \cite{zhang2010cloud}). As users submit computational jobs to a datacenter, the operator must decide how much of the datacenter resources to allocate to this particular user's job, while taking into account the needs of other users and the resource availability.

% The exact definition of ``cloud computing'' is a matter of some debate, as datacenter operators can offer several different service levels running in the cloud. At its most sophisticated, datacenters can provide entire software platforms, while at its most basic, the datacenter can simply provide computational and storage resources (e.g., CPU time and RAM memory) on which users can run computing jobs. In this work, we consider the latter definition of cloud computing: a datacenter operator provides users with computing resources for a certain amount of time, which are used to run computational jobs. Such a view of cloud computing is sometimes termed ``infrastructure-as-a-service,'' as opposed to ``platform-as-a-service'' or ``software-as-a-service.'' 

When provisioning resources to users, datacenter operators must take into account the fact that multiple types of resources are being allocated. For instance, users' jobs require both CPU time and RAM, which are non-substitutable resources: more RAM does not mitigate a need for CPU processing time, and neither does more processing time significantly lessen RAM requirements. This provisioning of multiple resources thus introduces new challenges to the datacenter operator's resource allocation problem, which are only beginning to be studied \cite{joe2012multi}. In this work, we consider the consequences of multiple resource allocation on the prices offered to datacenter users.

\subsection{Pricing Objectives}\label{sec:challenges}

As datacenters become ever more popular, many position papers have argued for the need to research new pricing mechanisms that can accommodate the competing objectives of a datacenter operator \cite{durkee2010cloud,weinhardt2009cloud}. While the operator of course wishes to maximize its revenue, its pricing policy must also take into account the dynamic nature of user demand: as users submit jobs to the datacenter, the operator must be able to accommodate the influx of new jobs and balance the needs of incoming users with those of users already in queue. We thus introduce a fairness dimension: to avoid incurring user dissatisfaction, the datacenter operator should allocate its resources so that each user can process a fair, or equitable, number of jobs. Treating users fairly, however, can in itself impact revenue. In this work, we introduce an optimization framework that allows datacenter operators to balance the revenue and fairness of its resource allocation through setting prices for its computing resources.

The presence of multiple resources gives a datacenter operator flexibility in offering different types of pricing plans. In this work, we examine the implications of different pricing schemes on the achieved fairness and revenue. We consider three different pricing plans:

{\bf Bundled pricing:} Bundled pricing assumes that operators group heterogeneous resources such as CPU time and memory together and then sell this resource bundle to users. For instance, Google's Compute Engine sells virtual machines (VMs), each with a fixed CPU and memory capacity \cite{google}. Basic pricing policies for Amazon's EC2 similarly offer VM bundles \cite{ec2}. Users are charged at a fixed unit rate per VM; they are free to make use of all or part of each VM resource. For instance, a user with computationally intensive jobs might not require the VM's full RAM capacity, but will need to purchase its use anyway.

{\bf Resource pricing:} Under resource pricing, an operator charges users separate unit prices for each resource that they use. Microsoft's Windows Azure, for instance, charges separate unit rates for CPU time and memory \cite{azure}. Unlike bundled pricing, resource pricing does not charge users for resources they do not use. One can, however, view resource pricing as a form of bundled pricing, in which the bundles for each user are customized to exactly fit that user's resource requirements. The price of each bundle is determined by the unit prices of the different resources.

{\bf Differentiated pricing:} In differentiated pricing, users are allocated exactly as much of each resource as they require to process their jobs. Differentiated pricing differs from resource pricing, however, in that the operator can freely determine each user's per-job cost, and is not constrained to charge users by unit resource prices. While differentiated pricing has not yet been introduced to the market, many believe that it will be necessary in the future \cite{odlyzko2009network}.

In addition to the three pricing plans introduced above, one may take advantage of datacenters' economies of scale by introducing {\bf volume discounts} into a pricing policy; Amazon's EC2 offers such a pricing plan, charging a lower unit rate for users who purchase a large quantity of resources \cite{ec2}. While some works have investigated the effects of volume discounts, e.g., for contract users \cite{du2012pricing}, none have done so in a multi-resource context. In this work, we compare bundled, resource, and differentiated pricing with volume discounts in a mathematical framework that takes into account user demand, operator revenue, and fairness across users.

While the pricing plans offered by specific operators have been analyzed and compared \cite{siham2012price,wang2010distributed}, we use a general model to solve for an operator's optimal prices. Our work is unique in accounting for both the pricing of multiple resources and the fairness of the resulting allocation. While other papers have accounted for fairness by using game theory to model user competition \cite{teng2010resource} or studying the strategy-proof property \cite{teo2009strategy}, these works do not explicitly deal with the presence of multiple, non-substitutable resources. Works that do consider multi-resource fairness, on the other hand, do not include user pricing. Instead, operators are assumed to allocate a certain number of jobs to users in a manner that is both efficient, in terms of a large number of processed jobs, and equitable, i.e., no user processes a disproportionately high or low number of jobs \cite{ghodsi2011dominant,joe2012multi,wagner2012autonomous}.

\subsection{Job Arrivals and Deadlines}

Dynamic pricing has been perhaps the most common approach taken in the literature on pricing for cloud or datacenter computing. Some papers have investigated revenue-maximizing prices \cite{feng2012revenue,xu2012maximizing}, while other models take into account such factors as job completion deadlines \cite{wang2012datacenter}. In another variant, users' willingness to pay and job prices vary depending on a given job's completion time \cite{tsakalozos2011flexible}. For instance, operators may give different price quotes to users as a function of the completion time, and the user can select from these quotes \cite{henzinger2010flexprice}. Such a formulation may be used to maximize the long-term social surplus, defined in terms of user utilities over time \cite{menache2011socially}. However, most of these works do not take the presence of multiple resources into account. Other works have proposed dynamic auctions, an approach popular with grid computing \cite{greenberg2008cost,samimi2011review}. However, such measures, e.g., the spot pricing offered by Amazon's EC2, have been shown to be ineffective in practice \cite{wee2011debunking}.

In our work, we suppose that the datacenter operator optimizes its prices over a finite time horizon. We divide time into discrete intervals, e.g., of one hour, and suppose that users are allocated the use of a given resource (e.g., CPU) for each particular interval.\footnote{If the time intervals are very short, this pricing is sometimes called \emph{resource-as-a-service}, i.e., the selling of individual resources at very short time granularities \cite{ben2012resource}.} We then have two possibilities: in the first, user demands could be the same during each time interval. The operator can then compute its prices in an offline optimization and simply offer those prices during each interval. In the second, more realistic, scenario, users' demands change over time, and the operator must change its prices accordingly. Operators must then account for job deadlines, so that users' jobs complete in a time acceptable to users. In the main body of this work, we consider the first scenario of optimization in a single time interval; Appendix \ref{sec:deadlines} shows how our work may be extended to account for changes in user demand and job deadlines.

We begin developing our framework in Section \ref{sec:user} by examining users' choices of how many jobs to submit, given a set of offered prices. Section \ref{sec:pricing} then discusses bundled, resource, and differentiated pricing in the context of users' choices and derives expressions for the operator's revenue. In Section \ref{sec:fair_rev}, we introduce a family of metrics to measure the fairness under different pricing plans. We then characterize the fairness-revenue tradeoff with these metrics and derive algorithms for practically choosing prices at different points on the tradeoff. Finally, in Section \ref{sec:numerical} we perform simulations on a dataset from a Google cluster to compare the achieved fairness and efficiency for different pricing plans and a range of resource capacities and volume discounts. Section \ref{sec:conclude} concludes the paper.

\section{User Optimization}\label{sec:user}

Suppose that a datacenter operator has multiple clients (i.e., users), each of whom submits jobs to the datacenter. Each job requires a certain amount of datacenter resources, e.g., CPU time, memory, and bandwidth; we suppose, as is often the case in practice, that these resources are not substitutable. We index the resources with the variable $i$, and denote the amount of resource $i$ required for each job by $R_{ij}$, where $j$ indexes the user type.  Different types of users are distinguished by the different resource requirements of their jobs; in an extreme case, each user might submit jobs with slightly different resource requirements and thus belong to a distinct type. Multiple users could also be grouped together into the same type, if their jobs have similar resource requirements that we can approximate as being the same. We fix the number of user types as $n$ and the number of resources as $m$. If users are grouped together into a few types, then we could have $n \leq m$; otherwise, in general we have $n \geq m$ as there are many more user clients than types of resources.

When deciding how many jobs to submit, users take into account the prices set by the datacenter operator, which yield a cost per job for each user $j$.  The structure of this pricing plan may vary; Section \ref{sec:pricing} introduces three different ways (bundled, resource, and differentiated pricing) in which the prices may be set. In all three pricing plans, each user has a fixed per-job cost, which we denote as $r_j$ for user $j$. This cost accounts for both the resource prices and the volume discounts, which are taken to be separate functions of the amount of each resource required by the user. We parameterize the volume discount with a constant $\gamma\in(0, 1]$; if a user requires $R_{ij}x_j$ amount of resource $i$, where $x_j$ denotes the number of jobs submitted by user $j$, then the user is charged according to the discounted quantity $R_{ij}^\gamma x_j^\gamma$. Thus, a larger $\gamma$ corresponds to a smaller volume discount, and $\gamma = 1$ corresponds to no volume discount. Since the amount of resources required is an affine function of the user's demand, we let $r_j$ denote the per-job cost with the volume discount and separately account for the discounted $x_j^\gamma$ term.

Users derive utility from the number of jobs that the datacenter processes, but lose utility from the amount that their jobs cost.\footnote{We assume throughout that the operator sets prices so that all submitted jobs can be processed with the resources available. ``Submitted'' and ``processed'' can thus be used interchangeably.} We use $x_j$ to denote the number of jobs processed for each user $j$, and use the function $U_j(x_j)$ to denote the utility from these jobs. Since a user's utility increases with the number of jobs processed, we follow the economic principle of diminishing marginal utility and assume that $U_j$ is a strictly increasing, concave function, expressed in units of dollars. Subtracting the cost $r_j x_j^\gamma$ from the utility of processing $x_j$ jobs, we find that each user $j$ maximizes
\begin{equation}
U_j\left(x_j\right) -r_j x_j^\gamma.
\label{eq:user_max}
\end{equation}
For simplicity, we assume that $U_j$ is continuously differentiable for $x_j > 0$. We suppose that the users have sufficient available cash to pay for any number of jobs $x_j$, so long as (\ref{eq:user_max}) is positive; for instance, enterprise users would meet this requirement. Users thus have no explicit budget constraint; they submit as many jobs as required to maximize (\ref{eq:user_max}) for a given set of prices. % subtract the cost of a job from the utility function $U_j$ rather than impose an explicit budget constraint (i.e., maximum budget) due to the fact that a user would likely derive greater utility from the same number of jobs as the price of resources decreased. Moreover, the resulting savings could then be used to send more jobs to the datacenter, relieving the user of the burden of processing all jobs with her own resources.

An example family of utility functions are the so-called \emph{isoelastic} functions
\begin{equation}
U_j(x) =
\begin{cases}
c_j(1 - \alpha_j)^{-1}x^{1 - \alpha_j}, &\alpha_j\in(0, 1) \\
c_j \log(x) &\alpha_j = 1
\end{cases}
\label{eq:alphafair}
\end{equation}
commonly used in economics \cite{textbook}. In (\ref{eq:alphafair}), $\alpha_j$ parameterizes the concavity of the utility function (i.e., quantifying the degree of diminishing marginal utility) and $c_j > 0$ is a positive constant that scales the utility level for user $j$. As $\alpha_j$ increases, $U_j$ becomes more concave, and the user becomes more price-sensitive since the cost term $r_j{x_j^\star}^\gamma$ term becomes more dominant. % We emphasize that the ``$\alpha$-fairness'' here should not be construed as enforcing any sort of fairness across different users; we use this term because it is a conventional name for functions of this type.\footnote{In Section \ref{sec:fair_rev}, we use a variant of the $\alpha$-fair functions (\ref{eq:alphafair}) to measure the fairness of a particular distribution of jobs among users.}

We assume that the user receives zero utility if no jobs are processed; this assumption may be enforced by taking $U_j(0) = 0$. The maximizer $x_j^\star$ of (\ref{eq:user_max}) then satisfies
\begin{equation}
U_j'\left(x_j^\star\right) = r_j\gamma{x_j^\star}^{\gamma - 1},
\label{eq:demand_cond}
\end{equation}
where $x_j^\star$, a function of the per-job price $r_j$, denotes the utility-maximizing number of jobs. % For instance, if $U_j$ is an $\alpha$-fair utility function (\ref{eq:alphafair}) and $d(x_j) = x_j$ (no volume discount), then (\ref{eq:demand_cond}) is satisfied by $x_j^\star = \left(r_j/c_j\right)^{1/(1 - \alpha_j)}$. In general, if there is no volume discount ($d(x) = x$), then $x_j^\star = {U_j'}^{-1} \left(r_j\right)$.

We assume that the operator chooses a value of $\gamma$ such that a solution to (\ref{eq:demand_cond}) exists, and that (\ref{eq:user_max}) has non-positive second derivative at this value of $x_j^\star$, ensuring that the solution is indeed a maximum. Mathematically, the second derivative condition may be expressed as
\begin{equation}
U_j''\left(x_j^\star\right) < \gamma(\gamma - 1){x_j^\star}^{\gamma - 2}r_j.
\label{eq:second_der}
\end{equation}
For instance, the isoelastic utility functions satisfy this condition if $\gamma > 1 - \alpha_j$. We can then show that user $j$'s demand $x_j^\star$ is decreasing in $r_j$. By differentiating (\ref{eq:demand_cond}), we find that this rate of decrease is
\begin{equation}
{x_j^\star}'(r_j) = \frac{\gamma {x_j^\star}^{\gamma - 1}}{U_j''\left(x_j^\star\right) + \gamma(1 - \gamma){x_j^\star}^{\gamma - 2}r_j},
\label{eq:xder}
\end{equation}
which is negative if (\ref{eq:second_der}) holds.

\section{Pricing Datacenter Jobs}\label{sec:pricing}

The presence of multiple datacenter resources allows the datacenter operator to offer many different pricing plans, each of which may be combined with volume discounts. In Section \ref{sec:plans}, we discuss the three pricing plans introduced in Section \ref{sec:challenges}: bundled, resource, and differentiated pricing. We incorporate these prices in the user-optimization framework introduced in the previous section and then compare the optimal prices and resulting revenue in Section \ref{sec:compare}.

% {\color{red} need to add examples? Or put that in the intro?}

\subsection{Pricing Strategies}\label{sec:plans}

We first consider {\bf bundled pricing}, in which the operator groups different resources in a fixed ratio--for instance, 1 CPU and 2 GB of RAM might be grouped together as a bundled resource. The operator can then charge users a unit price per bundle of resources required for the user's jobs. We let $b_i$ denote the amount of each resource $i = 1,\ldots,m$ in each bundle, and let $p$ denote the per-bundle price. Using the notation from Section \ref{sec:user} to describe users' utility functions and per-job resource requirements, we find that each user $j$ requires
\begin{equation}
\mu_j = \max_i\left(\frac{R_{ij}}{b_i}\right)
\label{eq:domshare}
\end{equation}
bundles in order to complete one job. To calculate the per-job cost, we incorporate the volume discount introduced in the previous section, parameterized by $\gamma\in(0,1]$. The per-job cost then becomes $r_j = \mu_j^\gamma p$. Moreover, the number of bundles available is $\min_i C_i/b_i$, where $C_i$ denotes the capacity of resource $i$. Thus, the operator faces the resource constraint
\begin{equation}
\sum_{j = 1}^n \max_i\left(\frac{R_{ij}}{b_i}\right)x_j^\star\left(\mu_j^\gamma p\right) \leq \min_i\frac{C_i}{b_i},
\label{eq:bundlecon}
\end{equation}
where $x_j^\star\left(\mu_j^\gamma p\right)$ denotes the user's demand for jobs at price $p$, given the bundle $\left(b_1,\ldots,b_m\right)$. The resulting revenue under bundled pricing, $\rho_b$, may be expressed as
\begin{equation}
\rho_b = p\sum_{j = 1}^n \left(\mu_jx_j^\star\left(\mu_j^\gamma p\right)\right)^\gamma.
\label{eq:bundlerev}
\end{equation}

We next observe that $\rho_b$ in (\ref{eq:bundlerev}) is a non-differentiable function of the amount of each resource $i$ (i.e., $b_i$) in the bundle; thus, the operator will likely find it difficult to optimize over the $b_i$. Instead, it can fix these $b_i$ and simply optimize over the unit price $p$. For instance, the $b_i$ might be chosen as proportional to the resource capacities $C_i$, i.e., such that for some constant $b$, $b = C_i/b_i$ for each resource $i$. In this case, the user's bundle requirement (\ref{eq:domshare}) is proportional to her dominant share, which is defined as follows:
\begin{defn}
A user's {\bf dominant share} is the maximum share of any resource allocated to the user. Thus, if a user receives $R_{ij}x_j^\star$ amount of each resource $i$, her dominant share would be $\max_i R_{ij}x_j^\star/b_i$, i.e., the bundle requirement (\ref{eq:domshare}) multiplied by $b$.
\end{defn}
In Section \ref{sec:fairness}, we show that users' dominant shares may be used to define the fairness of their utilities received.

While bundled pricing is convenient for the operator in the sense that only one price variable needs to be chosen, bundled pricing may lead to wasted capacity. Since different users' jobs have different resource requirements, the ratio of resources in each bundle will not in general match the ratio of per-job resource requirements for a user. Thus, some users may be forced to purchase excess capacity of some resources--for instance, a user with a per-job requirement of 2 CPU cores and 1 GB of RAM would need to purchase 4 CPU cores and 1 GB of RAM for each job if the operator offered bundles of 4 CPU cores and 1 GB of RAM. To avoid this wasted capacity, which could potentially be sold to another user with different resource requirements, an operator could offer {\bf resource pricing}, in which the unit price of each resource is independently determined by the operator, and users are permitted to purchase exactly as much of each resource as they need to process their jobs.

In this pricing framework, we let $p_i$ denote the unit price of each resource $i$, so that each user $j$'s per-job cost is $r_j = \sum_{i = 1}^m p_i R_{ij}^\gamma$. Thus, the operator's total revenue $\rho_r$ is
\begin{equation}
\rho_r = \sum_{i = 1}^m p_i \sum_{j = 1}^n \left(R_{ij}x_j^\star\left(\sum_{i = 1}^m p_iR_{ij}\right)\right)^\gamma.
\label{eq:revenue}
\end{equation}
The resource constraints, which require that the operator be able to provide enough resources to each user, can then be expressed as
\begin{equation}
\sum_{j = 1}^n R_{ij}x_j^\star \leq C_i
\label{eq:constr}
\end{equation}
for each resource $i$. The left-hand side corresponds to the sum of users' required amounts of resource $i$, while the right side is simply the capacity of resource $i$.

We next note that resource pricing may be viewed as a form of bundled pricing in which the ratio of resources in each bundle exactly matches that required by users. This relationship between bundled and resource pricing suggests that the operator can offer a third, even more customized type of pricing--{\bf differentiated pricing}, in which the operator not only allows users to receive only as much of each resource as they need, but independently sets the prices of these customized bundles. Thus, the operator can set a different per-bundle price $\overline{p}_j$ for each user $j$. We suppose that one bundle exactly equals the user's resource requirements for one job, so that the per-job cost $r_j = \overline{p}_j$. The operator's revenue $\rho_d$ under differentiated pricing is then
\begin{equation}
\rho_d = \sum_{j = 1}^n \overline{p}_jx_j^\star\left(\overline{p}_j\right)^\gamma,
\label{eq:differentialrev}
\end{equation}
and the resource constraints are exactly those under resource pricing (\ref{eq:constr}).

\subsection{Comparing Pricing Plans}\label{sec:compare}

In this section, we compare the operator's revenue under the three pricing plans introduced in the previous section. We suppose that for any of the three pricing plans, the operator chooses the prices so as to optimize a given objective function of users' per-job costs and demands, e.g., the revenue (\ref{eq:bundlerev},\ref{eq:revenue},\ref{eq:differentialrev}). In Section \ref{sec:fairness}, we define the fairness of users' utilities, which form another family of possible objective functions.

We first give sufficient conditions under which the objective value under bundled pricing is at most that under resource pricing:
\begin{lma}\label{lem:bundle}
If $R_{ij}/b_i$ for each user $j$ is maximized by resource $i = k$, then the maximum objective function value with bundling does not exceed the maximum objective with unbundled resources.
\end{lma}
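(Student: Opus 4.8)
The plan is to show that the set of (per-job cost, demand) pairs attainable under bundling is contained in the set attainable under resource pricing; since the objective is, by assumption, a function of users' per-job costs and demands alone, its maximum over the smaller feasible set cannot exceed its maximum over the larger one. Concretely, I would take an arbitrary feasible bundled price $p$ and construct a resource price vector that reproduces the same per-job costs and demands while remaining feasible.

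First, under the hypothesis that $\max_i(R_{ij}/b_i)$ is attained at $i=k$ for every user $j$, equation (\ref{eq:domshare}) gives $\mu_j = R_{kj}/b_k$, so the bundled per-job cost is $r_j = \mu_j^\gamma p = (R_{kj}/b_k)^\gamma p$. I would then choose the resource prices $p_k = p/b_k^\gamma$ and $p_i = 0$ for $i\neq k$. With these prices the resource per-job cost is $\sum_{i=1}^m p_i R_{ij}^\gamma = p_k R_{kj}^\gamma = (R_{kj}/b_k)^\gamma p$, matching the bundled $r_j$ exactly. Because the demand $x_j^\star$ depends only on $r_j$ through (\ref{eq:demand_cond}), the induced demands coincide under the two schemes, and hence any objective expressed through the per-job costs and demands takes the same value. (If strictly positive prices are required one passes to the limit $p_i\to 0^+$ and invokes continuity.)

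The crux is verifying that this resource price vector is feasible, i.e., that it satisfies \emph{all} the constraints (\ref{eq:constr}) rather than only the one for resource $k$. Here the hypothesis enters again: $R_{ij}/b_i \leq R_{kj}/b_k$ gives $R_{ij}\leq (b_i/b_k)R_{kj}$ for every $i$, while feasibility of the bundled price (\ref{eq:bundlecon}) reads $\sum_j (R_{kj}/b_k)x_j^\star \leq \min_i C_i/b_i$. Combining these, for each resource $i$,
\[
\sum_{j=1}^n R_{ij}x_j^\star \leq \frac{b_i}{b_k}\sum_{j=1}^n R_{kj}x_j^\star \leq b_i\min_{i'}\frac{C_{i'}}{b_{i'}} \leq b_i\cdot\frac{C_i}{b_i} = C_i,
\]
so every per-resource constraint holds and the constructed allocation is feasible.

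Combining these observations, for every feasible bundled price there is a feasible resource price achieving an identical objective value, so the maximum over bundled prices is dominated by the maximum over resource prices, which is the claim. I expect the main obstacle to be exactly the feasibility step: translating the single binding bundle budget $\min_i C_i/b_i$ into the full family of per-resource constraints depends entirely on the dominant-resource hypothesis lining up the inequalities as above. Without that hypothesis, the demands on different resources need not be simultaneously controllable by the single resource $k$, and the inclusion of feasible sets can fail.
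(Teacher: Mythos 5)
Your proof takes essentially the same route as the paper's: charge a positive price only for the dominant resource $k$, zero for all others, chosen so that every user's per-job cost---and hence demand and objective value---is unchanged, then verify feasibility. You are in fact more careful than the paper on two points: the paper sets $p_1 = p^b/b_1$, which under a volume discount $\gamma < 1$ yields per-job cost $p^b R_{1j}^\gamma/b_1$ rather than the bundled cost $p^b\left(R_{1j}/b_1\right)^\gamma$ (your choice $p_k = p/b_k^\gamma$ handles the exponent correctly), and where the paper merely asserts that ``the resource constraints are satisfied,'' you verify all $m$ per-resource constraints explicitly through the chain $\sum_j R_{ij}x_j^\star \leq (b_i/b_k)\sum_j R_{kj}x_j^\star \leq b_i \min_{i'} C_{i'}/b_{i'} \leq C_i$, which is exactly where the dominant-resource hypothesis is needed.
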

\begin{proof}
Suppose without loss of generality that $k = 1$. Each user's per-job bundle requirement is then $R_{1j}/b_1$. Let the optimal price with bundling be denoted by $p^b$. Then if the operator charges $p_1 = p^b/b_1$ for resource 1 and nothing for any other resource, each user's per-job cost without bundling is $R_{1j}p^b/b_1$, which equals the cost with bundling. Each user's demand is solely determined by the price per job, so each user demands the same amount of jobs with and without bundling, and the resource constraints are satisfied. Since the objective function can be written entirely in terms of users' per-job costs, the value of the objective function is also unchanged. Thus, the maximum objective without bundling is at least as much as that with bundling.
\end{proof}
Given a set of resource requirements, a bundle always exists that satisfies the requirements of Lemma \ref{lem:bundle}--for instance, the operator might simply set $b_1$ to be extremely small. Such a scenario, however, is somewhat artificial. If an operator instead sets the ratio of resources in each bundle equal to the ratio of resource capacities, then Lemma \ref{lem:bundle} holds if each user has the same dominant resource, defined as follows:
\begin{defn}
A given user's {\bf dominant resource} is the resource $k$ which maximizes $R_{ij}/C_i$ over all resources $i$.
\end{defn}
We thus have the following corollary:
\begin{cor}\label{cor:domresource}
Suppose that the ratio of resources in each bundle is equal to the ratio of resource capacities. Then if each user has the same dominant resource, the operator's maximum objective function value with bundling does not exceed that without bundling.
\end{cor}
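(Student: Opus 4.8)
The plan is to reduce the corollary directly to Lemma \ref{lem:bundle} by translating the hypothesis about a common dominant resource into the maximizer condition required by that lemma. First I would unpack the standing assumption that the ratio of resources in each bundle equals the ratio of resource capacities. As noted just before the statement, this means there is a single positive constant $b$ with $C_i/b_i = b$, equivalently $b_i = C_i/b$, for every resource $i = 1,\ldots,m$.

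Next, for an arbitrary fixed user $j$, I would compare the quantity $R_{ij}/b_i$ appearing in the bundle requirement (\ref{eq:domshare}) and in the hypothesis of Lemma \ref{lem:bundle} with the quantity $R_{ij}/C_i$ that defines the dominant resource. Substituting $b_i = C_i/b$ gives $R_{ij}/b_i = b\,(R_{ij}/C_i)$. Since $b > 0$ is a constant that does not depend on the resource index $i$, multiplying by it preserves the arg-max over $i$; hence the resource $i$ maximizing $R_{ij}/b_i$ is exactly the resource maximizing $R_{ij}/C_i$, i.e., the user's dominant resource. I would then invoke the hypothesis that all users share a common dominant resource $k$: by the previous step, $R_{ij}/b_i$ is maximized at $i = k$ for every user $j$ simultaneously. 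This is precisely the condition of Lemma \ref{lem:bundle}, so applying that lemma yields the claim that the maximum objective with bundling does not exceed the maximum objective without bundling.

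There is no substantial obstacle here, since the argument is essentially a change of variables feeding into an already-established lemma. The only point requiring care is confirming that the proportionality constant $b$ is strictly positive and independent of $i$, as it is this that guarantees the arg-max over resources is unchanged when passing from $R_{ij}/C_i$ to $R_{ij}/b_i$. Once that observation is in place, the corollary follows immediately from Lemma \ref{lem:bundle}.
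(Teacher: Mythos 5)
Your proposal is correct and follows essentially the same route as the paper's own proof: both identify the proportionality constant relating $b_i$ and $C_i$, observe that it preserves the arg-max so that the common dominant resource $k$ maximizes $R_{ij}/b_i$ for every user, and then invoke Lemma \ref{lem:bundle}. The only difference is that you spell out the positivity and $i$-independence of the constant, which the paper leaves implicit.
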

\begin{proof}
If the amount of each resource $i$ in the bundle is $b_i$, then there exists a constant $B$ such that $Bb_i = C_i$ for each resource $i$. Thus, the resource $k$ maximizing $R_{ij}/C_i$ also maximizes $R_{ij}/b_i$, and Lemma \ref{lem:bundle} applies.
\end{proof}
Intuitively, one might expect a result of this type, as the operator can choose only one price $p$ under bundled pricing, while it chooses $m \geq 1$ resource prices $p_i$ under resource pricing. We note, however, that in general the operator's revenue under bundled pricing may not be less than that under resource pricing--under bundled pricing, users often are required to purchase excess capacity, and the resulting excess revenue may offset the operator's decreased flexibility in choosing prices. However, when comparing resource pricing and differentiated pricing, we obtain a stronger result:
\begin{thm}\label{prop:resource_diff}
If an operator prices jobs by unit prices $p_i$ for each resource $i$, its maximum objective function value does not exceed the optimal value under differentiated pricing. If the matrix of resource requirements ${\bf R}$ has rank $n$, then the optimal objective function value under differentiated and resource pricing is the same.
\end{thm}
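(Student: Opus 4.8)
The plan is to reduce both optimization problems to optimizations over the vector of per-job costs $\mathbf{r} = (r_1,\ldots,r_n)$ and then compare the sets of cost vectors each pricing scheme can realize. The crucial structural observation, already exploited in the proof of Lemma~\ref{lem:bundle}, is that every quantity appearing in either problem depends on the chosen prices only through $\mathbf{r}$: each user's demand $x_j^\star$ is a function of its per-job cost $r_j$ alone by (\ref{eq:demand_cond}); the objective is expressible entirely in terms of the $r_j$ and these demands (indeed the resource revenue factors as $\rho_r = \sum_j r_j\,x_j^\star(r_j)^\gamma$, which is exactly the form of $\rho_d$); and the resource constraints (\ref{eq:constr}) are literally the same for both schemes, depending only on the fixed requirements $R_{ij}$ and the demands. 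Hence the two problems differ solely in which cost vectors are attainable. Under differentiated pricing every $\mathbf{r}$ is attainable, since one simply sets $\overline{p}_j = r_j$; under resource pricing the attainable vectors are precisely those of the form $r_j = \sum_{i=1}^m p_i R_{ij}^\gamma$.

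First I would prove the inequality. Take any feasible resource-pricing solution with prices $p_i$ and induced cost vector $\mathbf{r}$. Setting $\overline{p}_j = r_j$ for each $j$ yields a differentiated-pricing solution with identical per-job costs, hence identical demands, an identical objective value, and the same (already satisfied) resource constraints. Every resource-pricing solution therefore embeds into differentiated pricing without altering the objective, so the differentiated optimum is at least the resource-pricing optimum.

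For the equality claim I would establish the reverse containment of attainable cost vectors. Writing $\widetilde{R}_{ij} = R_{ij}^\gamma$, realizing a target vector $\mathbf{r}$ under resource pricing amounts to solving $\sum_{i=1}^m \widetilde{R}_{ij}\,p_i = r_j$ for all $j$, i.e.\ $\widetilde{\mathbf{R}}^{\mathsf T}\mathbf{p} = \mathbf{r}$, an $n$-equation system in the $m$ unknowns $p_i$. When $\mathbf{R}$ has rank $n$ the map $\mathbf{p}\mapsto\widetilde{\mathbf{R}}^{\mathsf T}\mathbf{p}$ from $\mathbb{R}^m$ to $\mathbb{R}^n$ has full row rank and is therefore surjective, so a solution $\mathbf{p}$ exists for every $\mathbf{r}$. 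Applying this to the cost vector induced by the optimal differentiated prices produces a resource-pricing solution with the same cost vector, hence—since both objective and constraints (\ref{eq:constr}) depend on prices only through $\mathbf{r}$—the same objective value and automatic feasibility. Combined with the inequality above, the two optima coincide.

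The step I expect to require the most care is this final surjectivity argument, for two reasons. First, the hypothesis is stated for $\mathbf{R}$, whereas the matrix actually governing the cost map is $\widetilde{\mathbf{R}} = [R_{ij}^\gamma]$; since raising entries to the power $\gamma$ need not preserve rank, I would either argue that full rank of $\mathbf{R}$ transfers to $\widetilde{\mathbf{R}}$ under the model's assumptions (it does trivially at $\gamma=1$) or restate the rank condition directly for $\widetilde{\mathbf{R}}$. Second, if the formulation insists on nonnegative resource prices, surjectivity alone does not deliver a nonnegative solution $\mathbf{p}$, and a Farkas-type conic argument would be needed; I would therefore check whether Section~\ref{sec:plans} permits prices of arbitrary sign, in which case only the realized per-job cost matters and the plain surjectivity argument suffices.
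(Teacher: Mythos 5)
Your proposal is correct and follows essentially the same route as the paper's proof: the inequality is obtained by setting $\overline{p}_j$ equal to the per-job cost induced by the optimal resource prices (so demands, objective, and constraints are unchanged), and the equality is obtained by solving the linear system for resource prices that reproduce the optimal differentiated per-job costs when $\mathbf{R}$ has rank $n$. The two caveats you flag---that the cost map is actually governed by the matrix $\left[R_{ij}^\gamma\right]$ rather than $\mathbf{R}$ itself, and that surjectivity alone does not guarantee nonnegative prices $p_i$---are genuine subtleties that the paper's terser proof silently glosses over, so your more careful treatment is a strict refinement rather than a departure.
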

\begin{proof}
Suppose that the optimal prices under resource pricing are $p_i$ for each resource $i$. Let the differentiated prices $\overline{p}_j = \sum_{i = 1}^m R_{ij}^\gamma p_i$, so that the price per job does not change for any user. Then $x_j^\star$, the user's demand for jobs, does not change either, as it depends solely on the per-job cost. The objective function value under these differentiated prices thus attains the maximum revenue under resource pricing. Moreover, since the resource constraints depend only on the number of jobs $x_j^\star$, these differentiated prices are a feasible solution.

If the matrix $R$ has rank $n$, then the resource prices may be set in such a way that the per-job cost for each user equals that under differentiated pricing. Thus, since the objective depends only on per-job costs, the optimal objective value under resource and differentiated pricing is the same.
\end{proof}
In general, a datacenter operator will have several more clients than resources (i.e., $m \leq n$)--the number of datacenter resources will generally be a small number, e.g., two for CPU cores and memory. However, if we instead group users into a few distinct types, then the $m\times n$ matrix of resource requirements ${\bf R}$ may have rank $n \leq m$.

\section{Fairness and Revenue}\label{sec:fair_rev}

In this section, we examine a datacenter operator's two objectives: revenue and fairness. We first characterize the operator's revenue in Section \ref{sec:revenue}, and then consider fairness in Section \ref{sec:fairness}. Section \ref{sec:tradeoff} then constructs an optimization framework that takes into account both revenue and fairness.  We derive bounds on the tradeoff between fairness and revenue and give an algorithm that finds the optimal prices for bundled, resource, and differentiated pricing at various points on this tradeoff curve.

\subsection{Revenue and Resource Constraints}\label{sec:revenue}

From the previous section, we see that the operator's revenue may be expressed as
\begin{equation}
\rho =
\begin{cases}
p\sum_{j = 1}^n \left(\mu_jx_j^\star\left(\mu_j^\gamma p\right)\right)^\gamma \\
\sum_{i = 1}^m p_i \sum_{j = 1}^n \left(R_{ij}x_j^\star\left(\sum_{i = 1}^m p_iR_{ij}\right)\right)^\gamma \\
\sum_{j = 1}^n \overline{p}_jx_j^\star\left(\overline{p}_j\right)^\gamma \\
\end{cases}
\label{eq:rev_all}
\end{equation}
for the bundled, resource, and differentiated pricing plans respectively. Should the operator wish to maximize its revenue, a natural approach would be to take the first derivative of the revenue with respect to each price variable and set it equal to zero. In this case, however, we can show that the revenue is always decreasing in each price variable, assuming that the utility function $U_j$ is not too concave compared to the cost $r_j {x_j^\star}^\gamma$. Intuitively, each user $j$'s total utility should grow slowly enough with the prices so that the optimal demand $x_j^\star$ is well-defined, yet grows sublinearly so that the overall amount paid $r_j{x_j^\star}^\gamma$ decreases with the prices ${|bf p}$.
\begin{thm}\label{prop:decrevbundle}
Revenue $\rho$ is decreasing in price if, for all users $j$,
\begin{equation}
0 < U_j''\left(x_j^\star\right) + \gamma r_j{ x_j^\star}^{\gamma - 2} < \gamma^2 r_j{x_j^\star}^{\gamma - 2},
\label{eq:cond_dec_rev}
\end{equation}
where $r_j$ denotes the per-job cost for user $j$ and is an affine function of the prices offered. For instance, the isoelastic utility functions (\ref{eq:alphafair}) satisfy these criteria if, as assumed in Section \ref{sec:user}, $\gamma > 1 - \alpha_j$.
\end{thm}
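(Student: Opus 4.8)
The plan is to show that each of the three revenue expressions in (\ref{eq:rev_all}) has a negative partial derivative with respect to every price variable, and to reduce all three cases to a single per-user inequality. The key observation is that each user $j$'s contribution to $\rho$ depends on the prices only through the per-job cost $r_j$, which is affine in the relevant price variable, and that the demand $x_j^\star$ in turn depends on the prices only through $r_j$. Writing $\pi$ for a generic price variable ($p$, $p_i$, or $\overline{p}_j$), the chain rule gives $\partial r_j/\partial \pi \geq 0$ in every case (equal to $\mu_j^\gamma$, $R_{ij}^\gamma$, or $1$ respectively), so $\partial\rho/\partial\pi$ is a nonnegatively weighted sum of the per-user quantities $q_j := {x_j^\star}^\gamma + \gamma r_j {x_j^\star}^{\gamma-1}{x_j^\star}'(r_j)$. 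It therefore suffices to prove $q_j < 0$ for every $j$ under (\ref{eq:cond_dec_rev}); the assembly into $\partial\rho/\partial\pi$ is then routine bookkeeping, with weight $\mu_j^\gamma$ for bundled pricing, weight $R_{\ell j}^\gamma$ for the resource price $p_\ell$, and weight $1$ for the differentiated price $\overline{p}_j$. In the resource case one checks that the cross terms $\sum_i p_i\gamma(R_{ij}{x_j^\star})^{\gamma-1}R_{ij}$ collapse to $\gamma r_j{x_j^\star}^{\gamma-1}$ via $r_j=\sum_i p_i R_{ij}^\gamma$, so every plan really does reduce to the same $q_j$.

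The core of the argument is signing $q_j$. Substituting the demand sensitivity (\ref{eq:xder}) and abbreviating its denominator as $D_j = U_j''({x_j^\star}) + \gamma(1-\gamma){x_j^\star}^{\gamma-2}r_j$, one finds $q_j = {x_j^\star}^\gamma + \gamma^2 r_j {x_j^\star}^{2\gamma-2}/D_j$. The right-hand inequality of (\ref{eq:cond_dec_rev}) is exactly the second-order condition (\ref{eq:second_der}), which guarantees $D_j < 0$ (equivalently ${x_j^\star}'(r_j)<0$). Multiplying the desired inequality $q_j<0$ through by the negative quantity $D_j$ reverses its direction and, after dividing by ${x_j^\star}^\gamma>0$, reduces it to $D_j + \gamma^2 r_j {x_j^\star}^{\gamma-2}>0$. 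Since $\gamma(1-\gamma)+\gamma^2 = \gamma$, this collapses precisely to $U_j''({x_j^\star}) + \gamma r_j{x_j^\star}^{\gamma-2}>0$, which is the left-hand inequality of (\ref{eq:cond_dec_rev}). Thus $q_j<0$ holds exactly when both bounds in (\ref{eq:cond_dec_rev}) hold, completing the per-user step.

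To verify the isoelastic claim I would use the first-order condition (\ref{eq:demand_cond}) to eliminate $r_j$: for $U_j(x)=c_j(1-\alpha_j)^{-1}x^{1-\alpha_j}$ one obtains $\gamma r_j{x_j^\star}^{\gamma-2}=c_j{x_j^\star}^{-\alpha_j-1}$ and $U_j''({x_j^\star})=-c_j\alpha_j{x_j^\star}^{-\alpha_j-1}$, so the left inequality becomes $c_j(1-\alpha_j){x_j^\star}^{-\alpha_j-1}>0$ (automatic for $\alpha_j\in(0,1)$) and the right inequality becomes $1-\alpha_j<\gamma$, matching the stated hypothesis. The main obstacle is the second paragraph: one must handle the sign reversal when multiplying by $D_j<0$ correctly---intuitively, the left inequality bounds the magnitude of the negative second derivative $U_j''$ so that demand falls slowly enough for the direct price effect to dominate---and one must confirm that the chain-rule weights entering $\partial\rho/\partial\pi$ are genuinely nonnegative so that a single negative $q_j$ per user propagates to a negative total derivative for all three pricing plans.
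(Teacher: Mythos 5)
Your proposal is correct and follows essentially the same route as the paper's proof: differentiate $\rho$ with respect to a generic price variable, reduce to the per-user condition $({x_j^\star})^\gamma + \gamma r_j ({x_j^\star})^{\gamma-1}{x_j^\star}'(r_j) < 0$, substitute (\ref{eq:xder}) to recover the left inequality of (\ref{eq:cond_dec_rev}), and identify the right inequality with the concavity assumption (\ref{eq:second_der}). Your treatment is somewhat more explicit than the paper's---in verifying the nonnegativity of the chain-rule weights across all three pricing plans and in using the first-order condition (\ref{eq:demand_cond}) to streamline the isoelastic check---but these are presentational refinements of the same argument.
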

\begin{proof}
We take the first derivative of the revenue $\rho$ with respect to a price variable $p_k$ (which may correspond to any of the three pricing plans) to obtain
\begin{equation*}
\frac{\partial\rho}{\partial p_k} = \sum_{j = 1}^n \frac{\partial r_j}{\partial p_k} x_j^\star\left(r_j\right)^\gamma + \sum_{j = 1}^n r_j\frac{\partial r_j}{\partial p_k}\gamma{x_j^\star}^{\gamma - 1}{x_j^\star}'\left(r_j\right).
\end{equation*}
Since ${x_j^\star}' < 0$, this derivative is negative if, for each user $j$,
\begin{equation*}
-{x_j^\star}' > \frac{{x_j^\star}}{\gamma r_j}.
\end{equation*}
We can then use (\ref{eq:xder}) to obtain the equivalent condition
\begin{equation*}
U_j'' > -\gamma r_j {x_j^\star}^{\gamma - 2},
\end{equation*}
which is the first inequality in (\ref{eq:cond_dec_rev}). The second inequality follows from our assumption of concavity (\ref{eq:second_der}) in Section \ref{sec:user}.

If we assume isoelastic utility functions for each user, as in (\ref{eq:alphafair}), then (\ref{eq:cond_dec_rev}) becomes
\begin{equation*}
0 < -c_j\alpha_j{x_j^\star}^{-\alpha_j - 1} + \gamma r_j{x_j^\star}^{\gamma - 2} < \gamma^2 r_j {x_j^\star}^{\gamma - 2},
\end{equation*}
where $x_j^\star = \left(r_j/c_j\right)^{1/(1 - \alpha_j - \gamma)}$. Thus, we find the condition
\begin{equation*}
0 < -\alpha_j\gamma + \gamma < \gamma^2,
\end{equation*}
or equivalently $0 < 1 - \alpha_j < \gamma$, which we have assumed to hold in Section \ref{sec:user}.
\end{proof}
Thus, the operator can maximize its revenue by offering the lowest possible prices compatible with its resource constraints, i.e., such that the resources required to process all users' jobs do not exceed the resource capacities. We next show that if users have isoelastic utility functions (\ref{eq:alphafair}), then the amount of each resource required to process users' demanded jobs is a decreasing function of the prices. The resource constraints thus impose a lower bound on the prices offered; in fact, they define a convex set of the prices:
\begin{lma}\label{lem:resourceconvex}
The set $\left\{{\bf p}\;|\; {\bf R_ix^\star}({\bf r}({\bf p})) \leq C_i,\;i = 1,2,\ldots,n\right\}$ is convex, where ${\bf r}$ is a vector of the per-job price for each user and ${\bf p}$ a vector of the prices chosen by the operator. Moreover, ${\bf R_ix^\star}({\bf r}({\bf p}))$ is decreasing in each price $p_k$.
\end{lma}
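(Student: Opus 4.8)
The plan is to reduce the entire lemma to a single statement about each resource: that the total usage $g_i(\mathbf{p}) = \sum_{j=1}^n R_{ij}\,x_j^\star\!\left(r_j(\mathbf{p})\right)$ is a convex, coordinatewise-decreasing function of the price vector $\mathbf{p}$. Granting this, the monotonicity assertion is immediate, and the feasible region is the finite intersection over $i$ of the sublevel sets $\{\mathbf{p} : g_i(\mathbf{p}) \leq C_i\}$, each of which is convex because the sublevel set of a convex function is convex; intersections of convex sets are convex, so the whole set is convex. The first thing I would record is that in each of the three pricing plans the per-job cost is an affine function of the operator's prices with nonnegative coefficients, namely $r_j = \mu_j^\gamma p$ (bundled), $r_j = \sum_i p_i R_{ij}^\gamma$ (resource), and $r_j = \overline{p}_j$ (differentiated), so that every $\partial r_j/\partial p_k \geq 0$.

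For the monotonicity claim I would simply invoke the fact, already established in Section \ref{sec:user} through equation (\ref{eq:xder}), that ${x_j^\star}'(r_j) < 0$ whenever the concavity condition (\ref{eq:second_der}) holds. The chain rule then gives $\partial x_j^\star/\partial p_k = {x_j^\star}'(r_j)\,\partial r_j/\partial p_k \leq 0$, and since each $R_{ij} \geq 0$, the sum $g_i$ is nonincreasing in every $p_k$. This part needs nothing beyond the monotonicity derived earlier and the affineness just noted.

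The substantive step is convexity, for which I would specialize to the isoelastic utilities (\ref{eq:alphafair}) assumed in the surrounding text. From the first-order condition (\ref{eq:demand_cond}) the demand is a power law $x_j^\star(r_j) = K_j\, r_j^{\beta_j}$ in the single variable $r_j > 0$, with $\beta_j = 1/(1-\alpha_j-\gamma)$ and $K_j > 0$ absorbing the constants. Under the standing assumption $\gamma > 1-\alpha_j$ we have $1-\alpha_j-\gamma < 0$, hence $\beta_j < 0$; differentiating twice yields $K_j\beta_j(\beta_j-1)r_j^{\beta_j-2}$, which is strictly positive since $\beta_j < 0$ forces $\beta_j(\beta_j-1) > 0$. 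Thus $x_j^\star$ is convex (and decreasing) in $r_j$. Because $r_j(\mathbf{p})$ is affine, the composition $x_j^\star(r_j(\mathbf{p}))$ is convex in $\mathbf{p}$, and $g_i = \sum_j R_{ij}\,x_j^\star$, a nonnegative linear combination of convex functions, is convex. Assembling the pieces then gives both the convexity of each constraint set and, by intersection, of the feasible region.

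I expect the only genuine obstacle to be the convexity of the single-variable demand $x_j^\star(r_j)$; the affine-to-convex composition, the nonnegative combination, and the intersection of convex sets are all routine. The delicate point is the sign of the exponent $\beta_j$, which rests entirely on the assumption $\gamma > 1-\alpha_j$ carried over from Section \ref{sec:user}: without it the power law could have the wrong curvature and the convexity argument would break down.
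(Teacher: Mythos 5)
Your proof is correct, and its skeleton is the paper's: both arguments reduce the lemma to showing that each user's isoelastic demand $x_j^\star$ is a convex, decreasing function of the prices, and then finish with nonnegative combinations, sublevel sets of convex functions, and intersections. Where you differ is in how the per-user convexity is established, and in scope. The paper treats only resource pricing (asserting the other plans are analogous) and computes the Hessian of $x_j^\star(\mathbf{p})$ entry by entry, exhibiting it in the form $\mathbf{R}\mathbf{Q}\mathbf{R}^T$ with $\mathbf{Q}$ diagonal and positive, hence positive semi-definite. You instead isolate the one-dimensional power law $x_j^\star(r_j) = K_j r_j^{1/(1-\alpha_j-\gamma)}$, note that the negative exponent (coming from the standing assumption $\gamma > 1-\alpha_j$, which both proofs lean on at exactly the same point) makes it convex and decreasing, and then invoke composition with an affine map followed by nonnegative summation. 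Mathematically these are the same calculation --- the paper's outer-product entries $R_{kj}R_{lj}$ are precisely what the chain rule produces inside your composition step --- but your packaging buys two things: a uniform treatment of bundled, resource, and differentiated pricing through the single observation that $r_j(\mathbf{p})$ is affine in $\mathbf{p}$ with nonnegative coefficients, and a monotonicity argument resting on the general condition (\ref{eq:xder}) rather than on isoelastic-specific differentiation; you also correctly describe $\mathbf{R_i x^\star}$ as a nonnegative linear combination of convex functions, where the paper loosely says ``affine combination.'' What the paper's explicit Hessian buys in return is the closed-form diagonal $\mathbf{Q}$, essentially the same object that reappears in the proof of Proposition \ref{prop:concave}, so its heavier computation is reused later, whereas your argument would need to be redone in that multivariate form there.
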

\begin{proof}
We consider only resource pricing; the result may be similarly proved for bundled or differentiated pricing. It is sufficient to show that $x_j^\star$ is a convex function of the prices ${\bf p}$ for each user $j$; since ${\bf R_i x^\star}$ is an affine combination of convex functions, the lemma follows.  We first differentiate and find that
\begin{equation*}
\frac{\partial x_j^\star}{\partial p_k} = \frac{\gamma^{\frac{1}{1 - \alpha_j - \gamma}}}{1 - \alpha_j - \gamma}\left(\sum_{i = 1}^m p_i R_{ij}\right)^{\frac{\alpha_j + \gamma}{1 - \alpha_j -\gamma}}R_{kj} < 0
\end{equation*}
since $\gamma > 1 - \alpha_j$ by assumption. Upon differentiating a second time, we find that
\begin{equation*}
\frac{\partial^2 x_j^\star}{\partial p_k\partial p_l} = \frac{\left(\alpha_j + \gamma\right)\gamma^{\frac{1}{1 - \alpha_j - \gamma}}}{\left(1 - \alpha_j - \gamma\right)^2}\left(\sum_{i = 1}^m p_i R_{ij}\right)^{\frac{2\alpha_j + 2\gamma - 1}{1 - \alpha_j -\gamma}} R_{kj}R_{lj}.
\end{equation*}
Thus, the second-derivative matrix of $x_j^\star$ equals ${\bf R}{\bf QR}^T$, where ${\bf Q}$ is a diagonal matrix with entries
\begin{equation*}
Q_{jj} = \gamma^{\frac{1}{1 - \alpha_j - \gamma}}\frac{\alpha_j + \gamma}{\left(1 - \alpha_j - \gamma\right)^2}\left(\sum_{i = 1}^m p_i R_{ij}\right)^{\frac{2\alpha_j + 2\gamma - 1}{1 - \alpha_j -\gamma}}.
\end{equation*}
A sufficient condition for the second-derivative matrix to be positive semi-definite is then that
\begin{equation*}
\gamma^{\frac{1}{1 - \alpha_j - \gamma}}\frac{\alpha_j + \gamma}{\left(1 - \alpha_j - \gamma\right)^2}\left(\sum_{i = 1}^m p_i R_{ij}\right)^{\frac{2\alpha_j + 2\gamma - 1}{1 - \alpha_j -\gamma}} > 0,
\end{equation*}
which holds for each user $j$.
\end{proof}
We have thus shown that in order to optimize revenue, the operator should offer the lowest prices possible. However, in doing so the operator could inadvertently favor one user over another, possibly introducing user dissatisfaction. For instance, suppose that one user $k$ requires a relatively large amount of one resource $l$ (e.g., this user runs extremely computationally-intensive jobs and requires a large number of CPU cores). Under resource pricing, the operator might then increase the price $p_l$ of this resource slightly, in order to offer much lower prices for the other resources while keeping the total demand for resource $l$, $\sum_{j = 1}^n R_{lj}x_j^\star\left(\sum_{i = 1}^m R_{ij}p_i\right)$, beneath the capacity $C_l$. But in this scenario, user $k$ would then have a higher per-job cost $\sum_{i = 1}^m R_{ik}p_i$ than other users, and thus process a lower number of jobs and receive lower utility $U_j(x_j^\star) - r_jx_j^\star$. In the next section, we explore ways to limit inequality among users by measuring the fairness resulting from a set of offered prices.

\subsection{Fairness}\label{sec:fairness}

When users submit jobs to the datacenter operator, they are allocated a given amount of each resource, which suffices for the job to be completed. Thus, the operator is performing a \emph{multi-resource allocation} among users, as considered in \cite{ghodsi2011dominant,joe2012multi}. In these works, the fairness of a resource allocation is measured in terms of the number of jobs processed for each user--users are assumed to receive utility not from the resources allocated, but rather from the jobs that those resources allowed them to complete. In our framework, since users use utility functions to measure their utility from processed jobs, we measure the fairness of the utility value received (i.e., $\overline{U}_j = U_j\left(x_j^\star\right) - r_j {x_j^\star}^\gamma$) by each user $j$, rather than simply measuring the number of jobs processed for different users.

When measuring the fairness of a distribution, we consider two factors: equitability and efficiency. Equitability refers to the idea that users should receive equitable utility values, e.g., instead of assigning prices so that one user processes a much larger number of jobs and receives a higher utility than another user. However, in some cases one can increase a user's utility level without significantly diminishing other users' utilities: a more unequal allocation can result in more \emph{efficiency}, or a larger amount of total utility across users. Following \cite{joe2012multi}'s exploration of this equitability-efficiency tradeoff for multi-resource allocations, we utilize the following fairness functions:
\begin{equation}
{\rm sgn}(1 - \beta)\left(\sum_{j = 1}^n \overline{U}_j^{1 - \beta}\right)^{\frac{1}{\beta}}\left(\sum_{j = 1}^n \overline{U}_j\right)^{\lambda + 1 - \frac{1}{\beta}},
\label{eq:betalambda}
\end{equation}
where $\beta\neq 1$ and $\lambda$ respectively parameterize the equitability and efficiency \cite{lan2010axiomatic}.\footnote{These functions are the unique fairness functions satisfying four desirable fairness axioms \cite{lan2010axiomatic}. In Appendix \ref{sec:properties}, we consider two additional properties (envy-freeness and Pareto-efficiency) that are not axiomatically satisfied.} The term $\left(\sum_{j = 1}^n \overline{U}_j\right)^{\lambda}$, a function simply of the total utility across users, may be regarded as the efficiency component of this fairness function; as $\left|\lambda\right|$ increases, the total utility is more heavily weighted in the function. The remainder of the function, parameterized by a factor $\beta$, measures the equitability of the utility distribution. This component is homogeneous in the total utility value, and thus measures only the relative values of the utilities received by each user. We note that if $\lambda = \beta^{-1} - 1$, then maximizing (\ref{eq:betalambda}) is equivalent to maximizing
\begin{equation}
F_\beta({\bf p}) = \frac{1}{1 - \beta}\sum_{j = 1}^n \overline{U}_j^{1 - \beta},
\label{eq:betafairness}
\end{equation}
where $\beta > 0$ parameterizes the type of fairness under consideration. This fairness function corresponds to the well-known $\alpha$-fairness function with $\alpha = \beta$; we thus observe that as $\beta$ grows, (\ref{eq:betafairness}) becomes ``more fair'' \cite{lan2010axiomatic}.% \footnote{The $\alpha$-fairness function used here is \emph{independent} of the $\alpha$-fairness functions for user utility used in (\ref{eq:alphafair}); the present function refers to a distribution across users, while (\ref{eq:alphafair}) relates only to individual users' actions to optimize their utility.}

We next show that when users have isoelastic utility functions, the fairness function (\ref{eq:betafairness}) becomes a weighted fairness function of the number of jobs processed for each user. To do so, we first note the following lemma:
\begin{lma}\label{lem:const}
Given a set of prices ${\bf p}$, then assuming users choose their demand so as to maximize their isoelastic utilities (\ref{eq:alphafair}),
\begin{align}
\overline{U}_j &= \left(\frac{\gamma}{c_j}\right)^{\frac{\gamma}{1 - \alpha_j - \gamma}}\left(\frac{\gamma}{1 - \alpha_j} - 1\right)r_j^{\frac{1 - \alpha_j}{1 - \alpha_j - \gamma}} \nonumber \\
&= \left(\frac{\gamma}{1 - \alpha_j} - 1\right)r_j{x_j^\star}^\gamma.
\label{eq:constfairrev}
\end{align}
Thus, the utility received from each user equals the amount paid $r_j{x_j^\star}^\gamma$, multiplied by a price-independent constant.
\end{lma}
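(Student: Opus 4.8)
The plan is to lean on the first-order optimality condition (\ref{eq:demand_cond}) so that the gross utility $U_j(x_j^\star)$ can be rewritten in terms of the payment $r_j{x_j^\star}^\gamma$. This immediately yields the second (and conceptually central) equality, after which substituting the explicit demand delivers the first, closed-form expression. The whole argument is an envelope-type computation specialized to the isoelastic family (\ref{eq:alphafair}).

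First I would specialize (\ref{eq:demand_cond}) to the isoelastic case $\alpha_j\in(0,1)$, where $U_j'(x)=c_jx^{-\alpha_j}$. The optimality condition then reads $c_j{x_j^\star}^{-\alpha_j}=\gamma r_j{x_j^\star}^{\gamma-1}$, and multiplying through by $x_j^\star$ produces the key identity $c_j{x_j^\star}^{1-\alpha_j}=\gamma r_j{x_j^\star}^{\gamma}$. Since the left-hand side is exactly $(1-\alpha_j)U_j(x_j^\star)$, this relation links the user's gross utility to her payment without any further solving, namely $U_j(x_j^\star)=\frac{\gamma}{1-\alpha_j}r_j{x_j^\star}^\gamma$.

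Next I would substitute this into $\overline{U}_j=U_j(x_j^\star)-r_j{x_j^\star}^\gamma$, so that the net utility collapses to $\left(\frac{\gamma}{1-\alpha_j}-1\right)r_j{x_j^\star}^\gamma$, which is precisely the second displayed equality. The concluding assertion of the lemma is then immediate: the multiplier $\frac{\gamma}{1-\alpha_j}-1$ depends only on $\gamma$ and $\alpha_j$, hence is independent of the prices, so $\overline{U}_j$ is the payment scaled by a price-independent constant. One should note that this constant is positive exactly because $\gamma>1-\alpha_j$ (the standing assumption of Section \ref{sec:user}), consistent with $\overline{U}_j>0$.

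Finally, for the closed-form first equality I would solve the optimality identity for the demand, obtaining $x_j^\star=\left(\gamma r_j/c_j\right)^{1/(1-\alpha_j-\gamma)}$, raise it to the power $\gamma$, and fold the result into $r_j{x_j^\star}^\gamma$; the only bookkeeping is the $r_j$ exponent, where $1+\frac{\gamma}{1-\alpha_j-\gamma}=\frac{1-\alpha_j}{1-\alpha_j-\gamma}$, while collecting the $\gamma/c_j$ factor yields $\left(\gamma/c_j\right)^{\gamma/(1-\alpha_j-\gamma)}$. There is no genuine obstacle here once the envelope identity $c_j{x_j^\star}^{1-\alpha_j}=\gamma r_j{x_j^\star}^\gamma$ is in hand; the only point to watch is that the $1-\alpha_j$ denominators require $\alpha_j\neq 1$, so the lemma is to be read for the $\alpha_j\in(0,1)$ branch of (\ref{eq:alphafair}), with $\gamma>1-\alpha_j$ additionally guaranteeing that the exponent $1/(1-\alpha_j-\gamma)$ is well-defined.
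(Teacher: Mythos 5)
Your proof is correct; in fact the paper states Lemma \ref{lem:const} without any proof at all, and your argument---using the first-order condition (\ref{eq:demand_cond}) to get the identity $c_j{x_j^\star}^{1-\alpha_j}=\gamma r_j{x_j^\star}^{\gamma}$, hence $\overline{U}_j=\bigl(\tfrac{\gamma}{1-\alpha_j}-1\bigr)r_j{x_j^\star}^{\gamma}$, then substituting the closed-form demand $x_j^\star=(\gamma r_j/c_j)^{1/(1-\alpha_j-\gamma)}$---is exactly the computation the paper leaves implicit (the same closed-form expressions reappear without comment in the proof of Proposition \ref{prop:bound}). Your side remarks are also correct and worth keeping: the formula only makes sense on the $\alpha_j\in(0,1)$ branch of (\ref{eq:alphafair}), and the multiplier is positive and the exponents well-defined precisely because of the standing assumption $\gamma>1-\alpha_j$.
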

By choosing the prices to offer, the operator thus determines both the number of jobs processed $x_j^\star$ as well as the coefficient $\left(\gamma/(1 - \alpha_j) - 1\right)r_j$ of ${x_j^\star}^\gamma$ in users' total utility. If $\gamma = 1$, this coefficient simply multiplies $x_j^\star$, or the number of jobs processed by user $j$.

The fairness functions (\ref{eq:betalambda}) and (\ref{eq:betafairness}) assume that all users are equal, in the sense that the utility received by each user is treated the same. However, in practice users are heterogeneous--they each submit a distinct type of jobs to the datacenter operator, with different resource requirements and different utility functions. Thus, it may not be equitable to treat all users equally.  In Lemma \ref{lem:const}, we account for users' differing utility functions with the constant $\gamma/(1 - \alpha_j) - 1$. This constant weight is decreasing in $\alpha_j$; thus, as a user becomes less price-sensitive ($\alpha_j$ decreases), the weight in the fairness function increases. More price-sensitive users tend to receive lower utility values for the same number of jobs due to lower values of $\alpha_j$ in their utility functions (\ref{eq:alphafair}), an effect that is offset with this weight.

By examining the value of the per-job cost $r_j$ for bundled, resource, and differentiated pricing, we can gain further insight into the interpretation of fairness for each of these pricing plans. Under bundled pricing, we find that $r_j = \mu_j^\gamma p$. Supposing that the ratio of resources in each bundle equals the ratio of resource capacities, we see from (\ref{eq:domshare}) that $\mu_j x_j^\star$ is simply user $j$'s dominant share; thus, if there is no volume discount ($\gamma = 1$), then the fairness weight $\left(\gamma/(1 - \alpha_j) - 1\right)r_j$ is simply the price $p$ multiplied by $1/(1 - \alpha_j) - 1$ and the user's dominant share. Thus, the fairness on users' utility values can be viewed as a (weighted) fairness on the amount of resources required by each user. Users whose jobs have larger resource requirements (larger $\mu_j$) may thus process fewer jobs--their jobs require more of the limited resources, thus detracting from other users' jobs. This \emph{fairness on dominant shares} metric has been previously proposed for general allocations without pricing \cite{ghodsi2011dominant,joe2012multi}.

If we instead consider differentiated pricing, we note that the weight $\left(\gamma/(1 - \alpha_j) - 1\right)r_j = \left(\gamma/(1 - \alpha_j) - 1\right)\overline{p}_j$ in (\ref{eq:constfairrev}) is completely determined by the operator; users' resource requirements only enter into the choice of $\overline{p}_j$ through the demand functions $x_j^\star$. This pricing scheme thus corresponds to a fairness function that does not as explicitly account for resource requirements.

Under resource pricing with no volume discount, we obtain a fairness metric that lies between the fairness with bundled and differentiated pricing. With resource pricing, we see from (\ref{eq:constfairrev}) that each $x_j^\star$'s weight in the fairness function (\ref{eq:betalambda}) becomes $\left(\gamma/(1 - \alpha_j) - 1\right)\sum_{i = 1}^m p_i R_{ij}$. Thus, the number of jobs is multiplied by a constant depending on $\alpha_j$, multiplied by a weighted sum of a user's resource requirements. Users with higher resource requirements will thus process fewer jobs. However, unlike bundled pricing's dominant shares, the exact weight for each resource is determined by  its price, which may be seen as a proxy for how ``valuable'' different resources are. More valuable resources will have higher prices and thus yield higher weights.

% As discussed briefly in Section \ref{sec:related}, the fairness of a particular distribution or allocation of resources is generally measured in abstract terms--a centralized controller is assumed to freely allocate a given resource to several users, and can then evaluate the resulting allocation using a particular fairness metric. In this section, we first introduce previously proposed fairness metrics for the allocation of multiple resources, in which the resources combine to produce jobs. We then formulate these fairness metrics in terms of the prices offered to users and show that the resulting fairness functions satisfy two desirable fairness properties: sharing incentive and Pareto-efficiency.

% The fairness of a particular allocation or distribution involves two components: equitability and efficiency. The first, equitability, refers to the idea that users should receive equitable amounts of the resource being distributed; for instance, it is more fair to split a resource equally between two users rather than giving everything to one user. However, in some cases a more unequal allocation can result in more \emph{efficiency}, i.e., a larger amount of resource being distributed. In a datacenter context, the operator might be able to give a large number of jobs to one user with relatively little resource requirements, while giving fewer jobs to other users. This tradeoff between 

\subsection{Fairness-Revenue Tradeoffs}\label{sec:tradeoff}

A datacenter operator wishes to optimize both revenue and fairness. In this section, we consider a weighted sum of the revenue (\ref{eq:rev_all}) introduced in Section \ref{sec:revenue} and the fairness (\ref{eq:betafairness}) introduced in Section \ref{sec:fairness}: 
\begin{equation}
\nu\rho({\bf p}) + F_\beta\left({\bf p}\right)
\label{eq:weighted}
\end{equation}
where $\nu > 0$ is the weight parameter determining the operator's relative emphasis on fairness or revenue. By optimizing a weighted sum of revenue and fairness, the operator can optimize both its revenue and the fairness of users' resulting utility distribution. We consider users with isoelastic utility functions (\ref{eq:alphafair}) and show that, when the revenue weight $\nu$ is sufficiently small, this optimization problem becomes convex. While this result indicates that an operator can more easily optimize fairness, possibly leading to a loss in revenue, we can adapt the proof of Lemma \ref{lem:const} to show that this tradeoff is limited.

We first derive conditions on the weight $\nu$ under which maximizing (\ref{eq:weighted}) subject to the resource constraints is a convex optimization problem:
\begin{thm}\label{prop:concave}
Suppose that $\beta > 1$ and that
\begin{align*}
0 \leq \nu \leq &\left(\frac{\gamma + \alpha_j - 1}{1 - \alpha_j}\right)^{1 - \beta}\gamma^{\frac{\beta\gamma}{\alpha_j + \gamma - 1} - 1}\big(\beta(1 - \alpha_j) - \gamma\big) \\
&\times \left(\max_i\left(\frac{C_i}{R_{ij}}\right)^{\frac{\beta(\alpha_j - 1)}{\gamma}}\right)
\end{align*}
for all users $j$. Then (\ref{eq:weighted}) is a concave function of the prices and, by Lemma \ref{lem:resourceconvex}, maximizing (\ref{eq:weighted}) subject to the resource constraints is a convex optimization problem.
\end{thm}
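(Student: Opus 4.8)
The plan is to reduce the concavity of the weighted objective (\ref{eq:weighted}) to a one-dimensional curvature condition on each user's per-job cost $r_j$, and then to choose $\nu$ so that this condition holds throughout the feasible price region. First I would use Lemma \ref{lem:const} to rewrite both summands of (\ref{eq:weighted}) as powers of $r_j$. Writing $\kappa_j = \gamma/(1-\alpha_j) - 1 > 0$ for the fairness weight in (\ref{eq:constfairrev}), each user's payment satisfies $r_j{x_j^\star}^\gamma = \overline{U}_j/\kappa_j$, so the revenue in (\ref{eq:rev_all}) becomes $\rho = \sum_j r_j{x_j^\star}^\gamma = \sum_j A_j r_j^{e_j}$ while the fairness (\ref{eq:betafairness}) becomes $F_\beta = \sum_j B_j r_j^{e_j(1-\beta)}$, where $e_j = (1-\alpha_j)/(1-\alpha_j-\gamma) < 0$, $A_j > 0$, and $B_j < 0$ since $\beta > 1$. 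Crucially, under every one of the three pricing plans $r_j$ is an \emph{affine} function of the operator's price variables, so (\ref{eq:weighted}) has the form $\sum_j g_j(r_j)$ with $g_j(r) = \nu A_j r^{e_j} + B_j r^{e_j(1-\beta)}$ composed with an affine map.

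Next I would reduce concavity to a scalar condition. Because each $r_j$ is affine in the prices, the Hessian of $\sum_j g_j(r_j)$ is $\sum_j g_j''(r_j)\,\nabla r_j(\nabla r_j)^{T}$, a sum of rank-one terms weighted by $g_j''(r_j)$; for differentiated pricing this Hessian is diagonal and for bundled pricing it reduces to a single scalar, but in all three cases it is negative semidefinite provided $g_j''(r_j) \le 0$ for every $j$. I would therefore compute $g_j''(r) = \nu A_j e_j(e_j-1)r^{e_j-2} + B_j s_j(s_j-1)r^{s_j-2}$ with $s_j := e_j(1-\beta)$. The first (revenue) term is strictly positive since $e_j < 0$; the second (fairness) term carries the sign of $B_j s_j(s_j-1)$, which is negative exactly when $s_j > 1$. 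A short computation gives $s_j > 1 \iff \beta(1-\alpha_j) > \gamma$, which is why the hypothesis $\beta > 1$ together with the factor $\beta(1-\alpha_j)-\gamma$ appears in the stated bound: this is precisely the regime in which the fairness term supplies the negative curvature needed to overcome the revenue term's convexity.

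It then remains to bound $\nu$ so that the negative term dominates uniformly over the feasible region. Rearranging $g_j''(r) \le 0$ gives $\nu \le \frac{-B_j s_j(s_j-1)}{A_j e_j(e_j-1)}\,r^{\,s_j-e_j}$, and since $s_j - e_j = -\beta e_j > 0$ the right-hand side is increasing in $r$, so the binding constraint occurs at the smallest feasible value of $r_j$. I would obtain this value from the resource constraints: each feasible allocation satisfies $R_{ij}x_j^\star \le C_i$ for every $i$, hence $x_j^\star \le \min_i C_i/R_{ij}$, and inverting the decreasing demand relation $x_j^\star = (r_j\gamma/c_j)^{1/(1-\alpha_j-\gamma)}$ yields a lower bound $r_j^{\min}$ expressed through $\min_i C_i/R_{ij}$. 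Substituting $r_j^{\min}$ into the threshold and collecting the constants $A_j$, $B_j$, and $\kappa_j$ (each an explicit power of $\gamma$, $1-\alpha_j$, and $\gamma+\alpha_j-1$) produces the displayed upper bound on $\nu$; concavity of the objective on the convex feasible set of Lemma \ref{lem:resourceconvex} then makes the maximization a convex program.

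The main obstacle I anticipate is this last step: correctly identifying the worst-case value of $r_j$ over the feasible polytope and carrying the multiplicative constants through the change of variables without sign or exponent errors, since $1-\alpha_j-\gamma < 0$ reverses several inequalities and every exponent is a ratio in $\gamma$, $\alpha_j$, and $\beta$. Verifying that $g_j'' \le 0$ at $r_j^{\min}$ indeed forces $g_j'' \le 0$ for all larger $r_j$ (the monotonicity of the threshold in $r$) is the conceptual key that lets a single inequality on $\nu$ suffice.
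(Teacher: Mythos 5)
Your proposal follows essentially the same route as the paper's own proof: the paper also factors the Hessian through the affine map from prices to per-job costs, writing it as $\left[R_{ij}^\gamma\right]{\bf Q}\left[R_{ij}^\gamma\right]^T$ with diagonal ${\bf Q}$, so that your per-user scalar condition $g_j''(r_j)\leq 0$ is precisely the paper's condition $Q_{jj}\leq 0$, and it likewise converts the resulting price-dependent bound on $\nu$ into a price-free one by lower-bounding $r_j$ via the resource constraints $R_{ij}x_j^\star\leq C_i$ (the worst case being the smallest feasible $r_j$, exactly your monotonicity observation). The only differences are presentational --- your explicit $g_j(r)=\nu A_j r^{e_j}+B_j r^{e_j(1-\beta)}$ packaging versus the paper's direct computation of $Q_{jj}$ --- so the proposal is correct and matches the paper's argument.
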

\begin{proof}
See Appendix \ref{sec:proofs}.
\end{proof}

If $\nu$ satisfies Prop. \ref{prop:concave}'s conditions, standard convex optimization solvers may be employed to solve for the optimal prices. Even when $\nu$ is larger than the bound given, however, we can adapt these algorithms to find the optimal prices, as shown in Algorithm \ref{alg:intpoint} for the interior-point algorithm. Intuitively, the interior point algorithm starts from a set of prices ${\bf p}$ and then uses Lagrange multipliers on the problem constraints to iterate the values of ${\bf p}$ so as to increase the objective while remaining within the problem constraints. In our case, differentiating $F_\beta$ for $\beta > 1$ shows that the objective (\ref{eq:weighted}) is decreasing in each price variable--thus, intuitively, the operator searches for the lowest feasible prices. Using this intuition, one can adapt standard proofs of convergence for the interior-point algorithm to show the following: 
\begin{thm}\label{prop:revenuemax}
Algorithm \ref{alg:intpoint} converges to a price vector ${\bf p}$ maximizing the objective (\ref{eq:weighted}).
\end{thm}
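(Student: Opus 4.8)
The plan is to exploit two structural facts that together substitute for the concavity that Proposition~\ref{prop:concave} guarantees only when $\nu$ is small. First, the objective (\ref{eq:weighted}) is strictly decreasing in each price $p_k$: the revenue term $\nu\rho$ decreases by Proposition~\ref{prop:decrevbundle}, while for the fairness term I would use Lemma~\ref{lem:const}, which expresses each $\overline{U}_j$ as a price-independent positive constant times the amount paid $r_j{x_j^\star}^\gamma$. Since that amount decreases in each price and $\beta>1$ makes $1-\beta<0$, differentiating $F_\beta$ in (\ref{eq:betafairness}) yields $\partial F_\beta/\partial p_k<0$. Second, by Lemma~\ref{lem:resourceconvex} the feasible set is convex, and because ${\bf R_ix^\star}$ is decreasing in each price it is moreover upward-closed, in the sense that raising any price only slackens the resource constraints. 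I would then argue that every maximizer of (\ref{eq:weighted}) lies on the lower boundary of the feasible region, where at least one constraint is tight, since from any interior point one can strictly increase the objective by lowering some price while remaining feasible.

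With the maximizer confined to the boundary, I would characterize it through the first-order conditions and show that, despite the lack of concavity, any stationary point is globally optimal. Writing each resource constraint as $g_i({\bf p})=C_i-{\bf R_ix^\star}\geq 0$, every $g_i$ is increasing in each price, so its gradient is componentwise strictly positive, while the objective gradient is componentwise strictly negative by the first step. At an interior point all $g_i>0$, so complementary slackness forces every multiplier to vanish and stationarity would demand a zero objective gradient, contradicting strict negativity; hence there are no interior stationary points and all Karush--Kuhn--Tucker points lie on the boundary. The consistent stationarity relation on the boundary forces the active multipliers to be strictly positive, pinning the optimum to the convex lower boundary and ruling out spurious critical points.

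Finally I would invoke the central-path machinery underlying Algorithm~\ref{alg:intpoint}: form the log-barrier objective $f({\bf p})+t\sum_i\log g_i({\bf p})$ and trace its maximizers as $t\downarrow 0$. The barrier confines the iterates to the interior, where $f$ is smooth, bounded above (since $F_\beta<0$ for $\beta>1$ and $\rho$ attains its supremum at the lowest feasible prices), and strictly monotone; I would use these to show the central path is well-defined, continuous, and convergent to a boundary Karush--Kuhn--Tucker point. Because the second step identifies every such point with a global maximizer, the limit of Algorithm~\ref{alg:intpoint} is exactly a maximizer of (\ref{eq:weighted}).

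The main obstacle is precisely the failure of concavity once $\nu$ exceeds the bound in Proposition~\ref{prop:concave}: the usual interior-point convergence proofs rely on concavity to guarantee that the barrier subproblems are unimodal and that the central path terminates at a global rather than a local optimum. Confining the maximizer to the boundary via monotonicity is the easy part; the real work is to show that the boundary point the central path approaches is globally, not merely locally, optimal. Establishing that strict coordinatewise monotonicity of the objective, paired with convexity and upward-closedness of the feasible set, substitutes for concavity by excluding every stationary point that is not globally optimal, and hence that the path does not stall at a spurious critical point, is the crux of the argument.
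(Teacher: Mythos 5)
Your route is in fact the same one the paper takes: the paper offers no formal proof of this proposition, only the observation that for $\beta>1$ the objective (\ref{eq:weighted}) is decreasing in every price variable (so the operator ``searches for the lowest feasible prices'') followed by the assertion that standard interior-point convergence proofs can be adapted. Your first paragraph reproduces exactly this monotonicity argument (correctly, via Prop.~\ref{prop:decrevbundle} and Lemma~\ref{lem:const}), and your observation that no interior point can be a KKT point is also correct. The problem is the step you yourself call ``the crux'' and never actually establish: that coordinatewise strict monotonicity of the objective, together with convexity and upward-closure of the feasible set, forces every boundary KKT point to be a global maximizer. That claim is false in general once there are two or more prices, i.e., precisely for resource and differentiated pricing. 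The lower boundary of the feasible region is then an $(m-1)$-dimensional surface, and moving along it trades a decrease in one price against an increase in another; coordinatewise monotonicity says nothing about the objective's behavior along such trade-off directions. Concretely, take $m=2$, feasible set $\left\{p_1+p_2\geq 1,\ p_1,p_2\geq 0\right\}$, and objective $f(p_1,p_2)=-(p_1+p_2)+0.1\sin\left(5(p_1-p_2)\right)+0.02(p_1-p_2)$. Both partial derivatives are strictly negative everywhere and the feasible set is convex and upward-closed, yet on the boundary segment $p_1+p_2=1$ the objective reduces to $-1+0.1\sin(5t)+0.02t$ with $t=2p_1-1\in[-1,1]$, which has two strict local maxima of different heights; both are KKT points (the gradient there is a negative multiple of $(1,1)$), but only one is globally optimal. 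So monotonicity cannot substitute for concavity, and the central-path argument in your third paragraph, which leans on ``every boundary KKT point is a global maximizer,'' collapses at exactly this step.

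To close the gap one needs structure beyond monotonicity: either the bound of Prop.~\ref{prop:concave}, under which (\ref{eq:weighted}) is concave and convergence is just the standard interior-point result, or a problem-specific argument exploiting the explicit isoelastic forms of $\rho$ and $F_\beta$ in (\ref{eq:betafairness}) to show the constrained problem has a unique KKT point. As written, your argument (like the paper's sketch) is complete only for bundled pricing, where the price is scalar, the feasible set is a half-line, and the maximizer is the single point where the constraint is tight --- which is exactly Corollary~\ref{cor:bundle}, not the general statement of this proposition.
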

In the case of only one resource, the lowest feasible price for this resource is the price for which the resource constraint is tight; by Lemma \ref{lem:resourceconvex}, since the demands $x_j^\star$ are decreasing in the price variables, a unique price exists satisfying this property. Thus, we obtain the following corollary:
\begin{cor}\label{cor:bundle}
Suppose that the operator offers a bundled pricing plan. Then its objective (\ref{eq:weighted}) for any revenue weight $\nu$ is maximized at the lowest feasible price, which is the zero of the bundled resource constraint $\sum_{j = 1}^n p\mu_j^\gamma{x_j^\star}(p)^\gamma = 1$.
\end{cor}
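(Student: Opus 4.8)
The plan is to specialize the monotonicity argument underlying Proposition \ref{prop:revenuemax} to the one-dimensional setting induced by bundled pricing, where the operator controls only the single per-bundle price $p$. First I would show that, for \emph{any} revenue weight $\nu$, the objective (\ref{eq:weighted}) is a strictly decreasing function of $p$. The revenue term $\rho$ is decreasing in $p$ by Proposition \ref{prop:decrevbundle}, whose hypotheses (\ref{eq:cond_dec_rev}) hold for isoelastic utilities under the standing assumption $\gamma > 1 - \alpha_j$. For the fairness term, Lemma \ref{lem:const} gives $\overline{U}_j = \left(\gamma/(1 - \alpha_j) - 1\right) r_j {x_j^\star}^\gamma$, and the per-user payment $r_j {x_j^\star}^\gamma$ is itself decreasing in $p$: indeed, the per-user condition $-{x_j^\star}' > x_j^\star/(\gamma r_j)$ established in the proof of Proposition \ref{prop:decrevbundle} is exactly what makes $\tfrac{d}{dp}\big(r_j{x_j^\star}^\gamma\big) < 0$, since $r_j = \mu_j^\gamma p$ is increasing in $p$. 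Hence each $\overline{U}_j$ decreases in $p$; and since $\beta > 1$ the map $u \mapsto \tfrac{1}{1 - \beta}u^{1 - \beta}$ is increasing in $u > 0$ (its derivative is $u^{-\beta} > 0$), so $F_\beta = \tfrac{1}{1 - \beta}\sum_j \overline{U}_j^{1 - \beta}$ also decreases in $p$. A nonnegatively weighted sum of two decreasing functions is decreasing, so the full objective decreases in $p$ for every $\nu$.

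Given this monotonicity, the maximizer is simply the smallest feasible $p$, so the second step is to characterize the feasible set and locate its left endpoint. By Lemma \ref{lem:resourceconvex} the feasible set is convex, hence an interval; and because the resource usage $\sum_j \mu_j x_j^\star(\mu_j^\gamma p)$ is strictly decreasing in $p$ (again Lemma \ref{lem:resourceconvex}), this interval is a half-line of the form $[p_{\min}, \infty)$, since raising the price can never push a feasible price back outside the region. The left endpoint $p_{\min}$ is therefore the unique value at which the bundled resource constraint (\ref{eq:bundlecon}) holds with equality, i.e., the zero of the (usage minus capacity) map, which is precisely the stated tightness condition.

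Finally, I would confirm existence and uniqueness of $p_{\min}$. Uniqueness is immediate from the strict monotonicity of usage in $p$. Existence follows from continuity of $x_j^\star$ together with the limiting behavior of isoelastic demand: as $p \to 0^+$ each $x_j^\star \to \infty$, so usage exceeds the finite capacity $\min_i C_i/b_i$, while as $p \to \infty$ each $x_j^\star \to 0$, so usage falls below capacity; the intermediate value theorem then supplies a crossing. I expect the main obstacle to be the bookkeeping in the first step, namely verifying that the fairness component $F_\beta$ (and not merely the revenue) is decreasing in $p$ for $\beta > 1$, since this is exactly what extends the conclusion to \emph{every} $\nu$ rather than only the small-$\nu$ regime covered by Proposition \ref{prop:concave}. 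Once that monotonicity is secured, the remainder is a routine one-dimensional feasibility argument.
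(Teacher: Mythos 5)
Your proof is correct and takes essentially the same route as the paper, which obtains the corollary from the preceding discussion: for $\beta > 1$ the objective (\ref{eq:weighted}) is decreasing in the price (revenue by Proposition \ref{prop:decrevbundle}, fairness by differentiating $F_\beta$), so the maximizer is the lowest feasible price, which by the monotonicity of demand in Lemma \ref{lem:resourceconvex} is the unique price at which the bundled resource constraint is tight. Your extra details---deriving the monotonicity of $F_\beta$ via Lemma \ref{lem:const} and establishing existence of the tight price by an intermediate-value argument---simply fill in steps the paper leaves implicit.
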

We finally note that Algorithm \ref{alg:intpoint} holds only for fixed volume discounts $\gamma$. However, if the operator's optimization problem is convex, as in Prop. \ref{prop:concave}, then the optimal prices for each $\gamma$ can be determined efficiently. The operator can then perform a line search to determine the optimal volume discount. Section \ref{sec:discounts} numerically characterizes the achieved fairness and revenue for different volume discounts.

\begin{algorithm}[t]
\scriptsize
\KwData{Parameters $\alpha_j$, $\nu$, and $\gamma$, tolerance $\epsilon > 0$, update parameter $\mu > 1$.}
\KwResult{Optimized resource prices.}
Initialize $t \leftarrow 1$\;
\While{$t < m/{tol}$}{
Find ${\bf p_k^\star}(t)$ satisfying $f_t({\bf p}) = -t\nu\nabla\rho({\bf p}) -t\nabla F_\beta({\bf p})+ \sum_{i = 1}^m \frac{-{\bf \nabla x^\star}({\bf p}) {\bf R_i}^T}{{\bf R_i x^\star}\left(\sum_{i = 1}^m p_i R_{ij}\right) - C_i} = 0$ where $k$ indexes the number of solutions\;
Choose $k$ such that ${\bf p_k^\star}(t)$ minimizes $f_t(p) = -t\nu\rho(p) -t\nabla F_\beta({\bf p}) - \sum_{i = 1}^m \log\left(-{\bf R_ix^\star\left(\sum_{i = 1}^m p_i R_{ij}\right)} + C_i\right)$\;
Update ${\bf \overline{p}} \leftarrow {\bf p^\star_k}(t)$\;
Update $t\leftarrow \mu t$\;}
\caption{Interior-point optimization of (\ref{eq:weighted}).}
\label{alg:intpoint}
\end{algorithm}

The upper bound on the revenue weight $\nu$ in Prop. \ref{prop:concave} raises some concern that the operator may overly weight fairness and consequently lose revenue. However, when users have isoelastic utility functions, we can in fact bound the achieved revenue in terms of the achieved fairness when the fairness parameter $\beta > 1$; thus, even if the fairness term in (\ref{eq:weighted}) dominates the revenue term, the operator's revenue is still lower-bounded away from zero. Moreover, if $\beta < 1$, a class of fairness measures not included in Prop. \ref{prop:concave}, then the achieved fairness can be lower-bounded in terms of the achieved revenue. Thus, emphasizing fairness in the operator's objective need not lead to a significant loss of either revenue or of the fairness measured by metrics besides those in Prop. \ref{prop:concave}.
\begin{thm}\label{prop:bound}
Suppose that $\beta > 1$. The revenue $\rho\left({\bf p}\right)$ can then be lower-bounded in terms of the achieved fairness (\ref{eq:betafairness}):
\begin{equation*}
\rho({\bf p}) \geq \big(F_\beta({\bf p})(1 - \beta)\big)^{\frac{1}{1 - \beta}}\sum_{j = 1}^n \frac{1 - \alpha_j}{\gamma + \alpha_j - 1}.
\end{equation*}
On the other hand, if $\beta < 1$, then the fairness value $F_\beta({\bf p})$ may be lower-bounded in terms of the achieved revenue:
\begin{equation*}
F_\beta({\bf p}) \geq \frac{\rho\left({\bf p}\right)^{1 - \beta}}{1 - \beta}\left(\frac{\gamma}{1 - \alpha_k} - 1\right)^{1 - \beta},
\end{equation*}
$\alpha_k = \max_j \alpha_j$. Intuitively, a lower value of $\beta$ more emphasizes efficiency: $\lambda = 1/\beta - 1$ in (\ref{eq:betalambda}) is increasing in $\beta$. By Lemma \ref{lem:const}, the efficiency, or total utility received, is simply a scalar multiple of the revenue. A higher $\beta$, on the other hand, gives more emphasis on fairness, and thus cannot be lower-bounded by the revenue.
\end{thm}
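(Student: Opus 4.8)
The plan is to reduce all three pricing plans to a common form and then treat the two regimes separately. First I would record the identity underlying every case: comparing the revenue expressions in (\ref{eq:rev_all}) with the definitions of the per-job cost $r_j$ shows that, for bundled, resource, and differentiated pricing alike, the total revenue is $\rho=\sum_{j=1}^n r_j{x_j^\star}^\gamma$, i.e. the sum over users of the amount each user pays. Writing $a_j:=r_j{x_j^\star}^\gamma$ for this amount, Lemma \ref{lem:const} (equation (\ref{eq:constfairrev})) gives $\overline U_j=w_j a_j$ with weight $w_j:=\frac{\gamma}{1-\alpha_j}-1=\frac{\gamma+\alpha_j-1}{1-\alpha_j}$, which is strictly positive by the standing assumption $\gamma>1-\alpha_j$. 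Hence $\rho=\sum_j \frac{1}{w_j}\overline U_j=\sum_j \frac{1-\alpha_j}{\gamma+\alpha_j-1}\,\overline U_j$, while $F_\beta=\frac{1}{1-\beta}\sum_j \overline U_j^{1-\beta}$. Everything then reduces to inequalities relating $\sum_j \overline U_j^{1-\beta}$ to $\sum_j \overline U_j/w_j$.

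For the case $\beta>1$ I would argue termwise. Since $1-\beta<0$ and each $\overline U_j>0$, the quantity $F_\beta(1-\beta)=\sum_{l}\overline U_l^{1-\beta}$ is a sum of positive terms and so dominates any single term: $F_\beta(1-\beta)\ge \overline U_j^{1-\beta}$ for every $j$. Applying the map $t\mapsto t^{1/(1-\beta)}$, which is \emph{decreasing} because $1/(1-\beta)<0$, reverses the inequality and yields $\overline U_j\ge \big(F_\beta(1-\beta)\big)^{1/(1-\beta)}$ for all $j$. Substituting this uniform lower bound into $\rho=\sum_j \overline U_j/w_j$ and pulling the constant out of the sum gives exactly $\rho\ge \big(F_\beta({\bf p})(1-\beta)\big)^{1/(1-\beta)}\sum_{j}\frac{1-\alpha_j}{\gamma+\alpha_j-1}$, the first claimed bound.

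For $\beta<1$ I would instead exploit the concavity of $t\mapsto t^{1-\beta}$ on $[0,\infty)$. Because $1-\beta\in(0,1)$ this map is superadditive, so $\sum_j \overline U_j^{1-\beta}\ge \big(\sum_j \overline U_j\big)^{1-\beta}$; dividing by $1-\beta>0$ gives $F_\beta\ge \frac{1}{1-\beta}\big(\sum_j \overline U_j\big)^{1-\beta}$. It remains to bound the total utility $\sum_j \overline U_j=\sum_j w_j a_j$ below by a multiple of $\rho=\sum_j a_j$; since every $w_j$ is positive, $\sum_j w_j a_j\ge \big(\min_l w_l\big)\rho$, and identifying the minimizing weight with the extremal user $k$ gives $\sum_j \overline U_j\ge \big(\frac{\gamma}{1-\alpha_k}-1\big)\rho$. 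Raising to the power $1-\beta$ (order-preserving here) then yields $F_\beta\ge \frac{\rho^{1-\beta}}{1-\beta}\big(\frac{\gamma}{1-\alpha_k}-1\big)^{1-\beta}$.

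I expect the $\beta<1$ direction to be the main obstacle, for two reasons. First, one must invoke superadditivity in the correct direction: it is precisely $1-\beta\in(0,1)$ that makes $\sum \overline U_j^{1-\beta}\ge(\sum\overline U_j)^{1-\beta}$ rather than the reverse, mirroring why the analogue fails for $\beta>1$, where the negative power means the efficiency term cannot control the revenue. Second, one must pin down the correct extremal weight: because this is a \emph{lower} bound on $F_\beta$ and $F_\beta$ increases with total utility, it is the \emph{smallest} weight $\min_l\big(\frac{\gamma}{1-\alpha_l}-1\big)$ that is needed, so I would verify carefully that $w_l=\frac{\gamma}{1-\alpha_l}-1$ is monotone in $\alpha_l$ and determine which extreme of $\alpha_l$ realizes that minimum. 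The remaining steps, namely positivity of the $\overline U_j$ and the reduction of all three revenue formulas to $\sum_j a_j$, are routine.
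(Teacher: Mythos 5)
Your proposal is correct and is essentially the paper's own proof: both reduce everything through Lemma \ref{lem:const} to $\overline{U}_j = w_j\, r_j {x_j^\star}^\gamma$ with $\rho({\bf p}) = \sum_j r_j{x_j^\star}^\gamma$, prove the $\beta>1$ bound by noting that $F_\beta({\bf p})(1-\beta)$ dominates each single term $\overline{U}_j^{1-\beta}$ and then inverting via the negative exponent $1/(1-\beta)$, and prove the $\beta<1$ bound by combining the smallest weight with $\sum_j \overline{U}_j^{1-\beta} \geq \big(\sum_j \overline{U}_j\big)^{1-\beta}$ (the paper calls this sub-additivity of $t^{1-\beta}$, which is the standard name for what you label superadditivity; the inequality you use is the right one). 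Your closing caution about which extreme of $\alpha_l$ realizes the minimal weight is well placed: since $w_l = \gamma/(1-\alpha_l)-1$ is increasing in $\alpha_l$, the minimizing user has $\alpha_k = \min_j \alpha_j$ --- this is what the paper's own proof uses, and the ``$\alpha_k = \max_j \alpha_j$'' appearing in the proposition statement is a typo (with the maximal weight the claimed bound would in fact be false, e.g.\ for two users with equal payments and very different $\alpha_j$).
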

\begin{proof}
See Appendix \ref{sec:proofs}.
\end{proof}

\section{Numerical Illustrations}\label{sec:numerical}

In this section, we use a six-hour workload trace from a Google cluster to illustrate the effects of resource capacity and volume discounts on a datacenter operator's achieved revenue and fairness for different pricing plans \cite{hellerstein2010cluster}. In particular, we find that increasing resource capacity allows improvements in both fairness and revenue, but that these results are highly dependent on the heterogeneity of the users' resource requirements. Varying the volume discount offered allows an increase in fairness, but results in decreased revenue and more leftover resources under resource and differentiated pricing. We find that for a range of resource capacities and volume discounts, all three pricing plans yield comparable revenues, but differentiated pricing permits a significantly higher fairness.

Our workload trace includes 9218 jobs divided into 176580 tasks; each task runs on a single machine within the cluster. The amount of memory (RAM) and fractional number of CPU cores taken up by each active task was recorded at five-minute intervals over 6 hours; these measurements were then passed through a linear transformation to ensure anonymity. For this reason, we omit units in our discussion here. We add each job's resource usage over all the time intervals recorded to find an average per-job CPU usage of 0.136 and memory usage of 0.182.

\begin{figure}[t]
\centering
\includegraphics[width = 0.45\textwidth]{./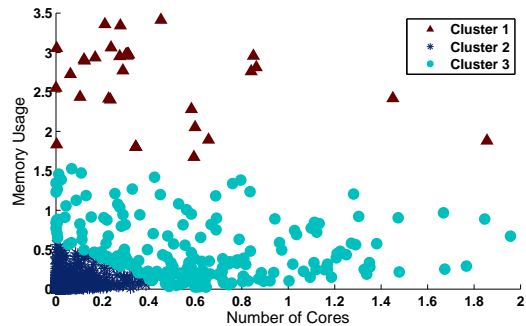}
\vspace{-0.05in}
\caption{Jobs in the Google trace, clustered based on the total CPU cores and memory used in a 6-hour interval.}
\label{fig:cluster}
\vspace{-0.15in}
\end{figure}

The jobs in the Google trace show a large variation in resource usage; the CPU and memory usage distributions have respective standard deviations of 13.4 and 18 times their means. To simplify our simulations, we exclude jobs whose total usage of either CPU or memory lies more than one standard deviation away from the mean. We then use $k$-means clustering to group jobs into three different clusters or types by their CPU and memory usage. Figure \ref{fig:cluster} shows the resulting clusters of jobs; we ran 30 instances of the $k$-means algorithm and chose the result minimizing the intra-cluster distance. We take the $k$-means centroid points to be the resource requirements of each job type: type 1 jobs require 0.4 CPUs and 2.7 units of memory, while type 2 jobs require only 0.01 units of CPU and 0.02 units of memory and type 3 jobs require 0.6 units of CPU and 0.5 units of memory. We associate a user type with each type of job and assume that these users have isoelastic utility functions (\ref{eq:alphafair}), with parameters $c_j = 1$ and $\alpha_j = 0.4, 0.7, 0.5$ for $j = 1,2,3$ respectively. Taking the capacity of each resource to be 6 units, the operator's objective (\ref{eq:weighted}) then becomes:
\begin{align}
\max_{\bf p}\;&\nu\rho({\bf p}) + F_\beta({\bf p}) \label{eq:opt}\\
{\rm s.t.}\;&\left\{0.4x_1^\star + 0.01 x_2^\star + 0.6 x_3^\star,\;2.7x_1^\star + 0.02x_2^\star + 0.5 x_3^\star\right\} \leq 6, \nonumber
\end{align}
where ${\bf p}$ is the vector of prices offered for either bundled, resource, or differentiated pricing. For scalability, we suppose that there are 10 users, with 1 user each of types 1 and 3 and 8 users of type 2.

\begin{figure*}
\centering
	\begin{subfigure}{0.34\textwidth}
		\centering
		\includegraphics[width = \textwidth]{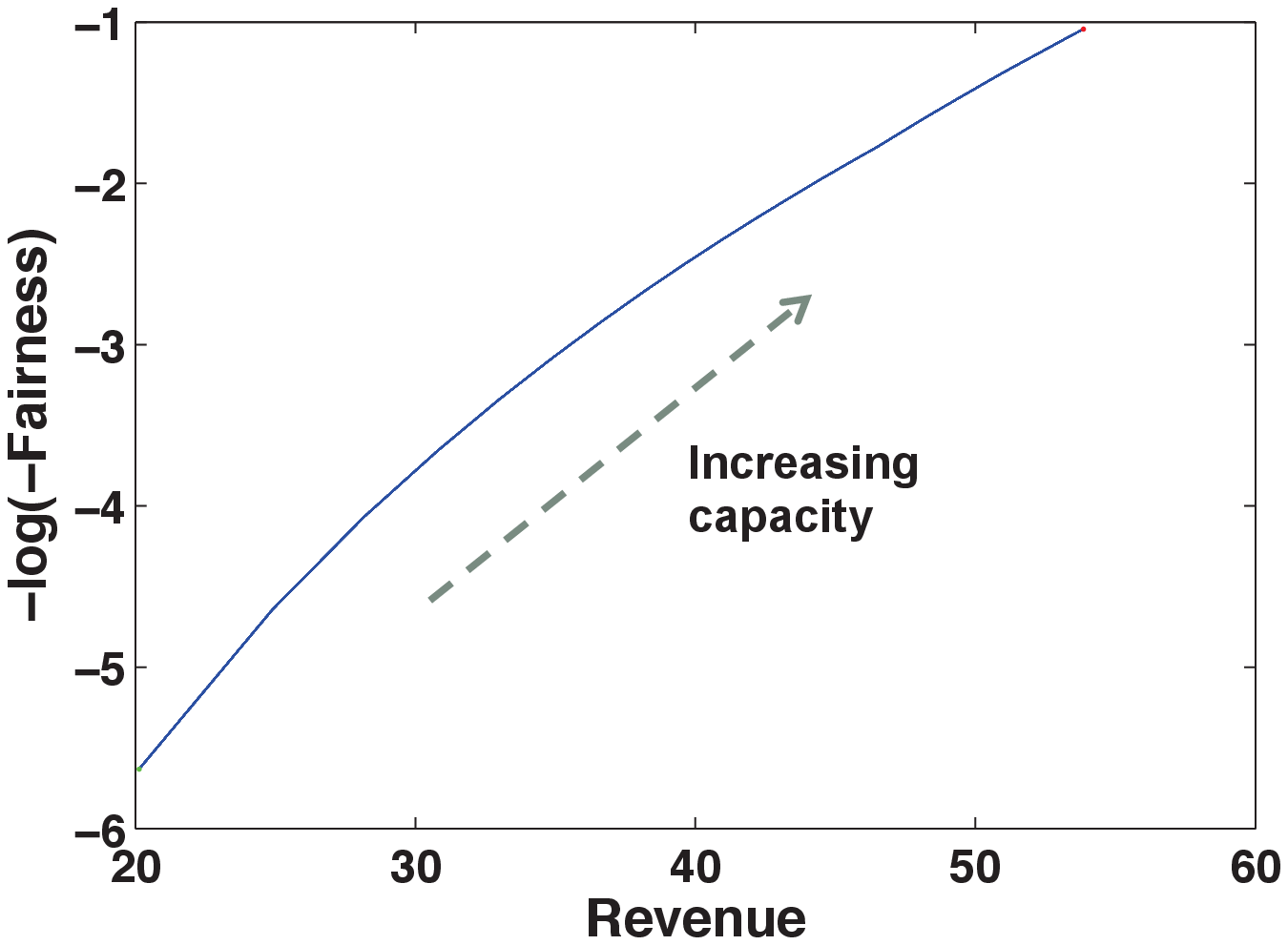}
		\caption{Bundled pricing.}
		\label{fig:capacity_bund_fair_rev}
	\end{subfigure}
	\hspace{-0.02\textwidth}
	\begin{subfigure}{0.34\textwidth}
		\centering
		\includegraphics[width = \textwidth]{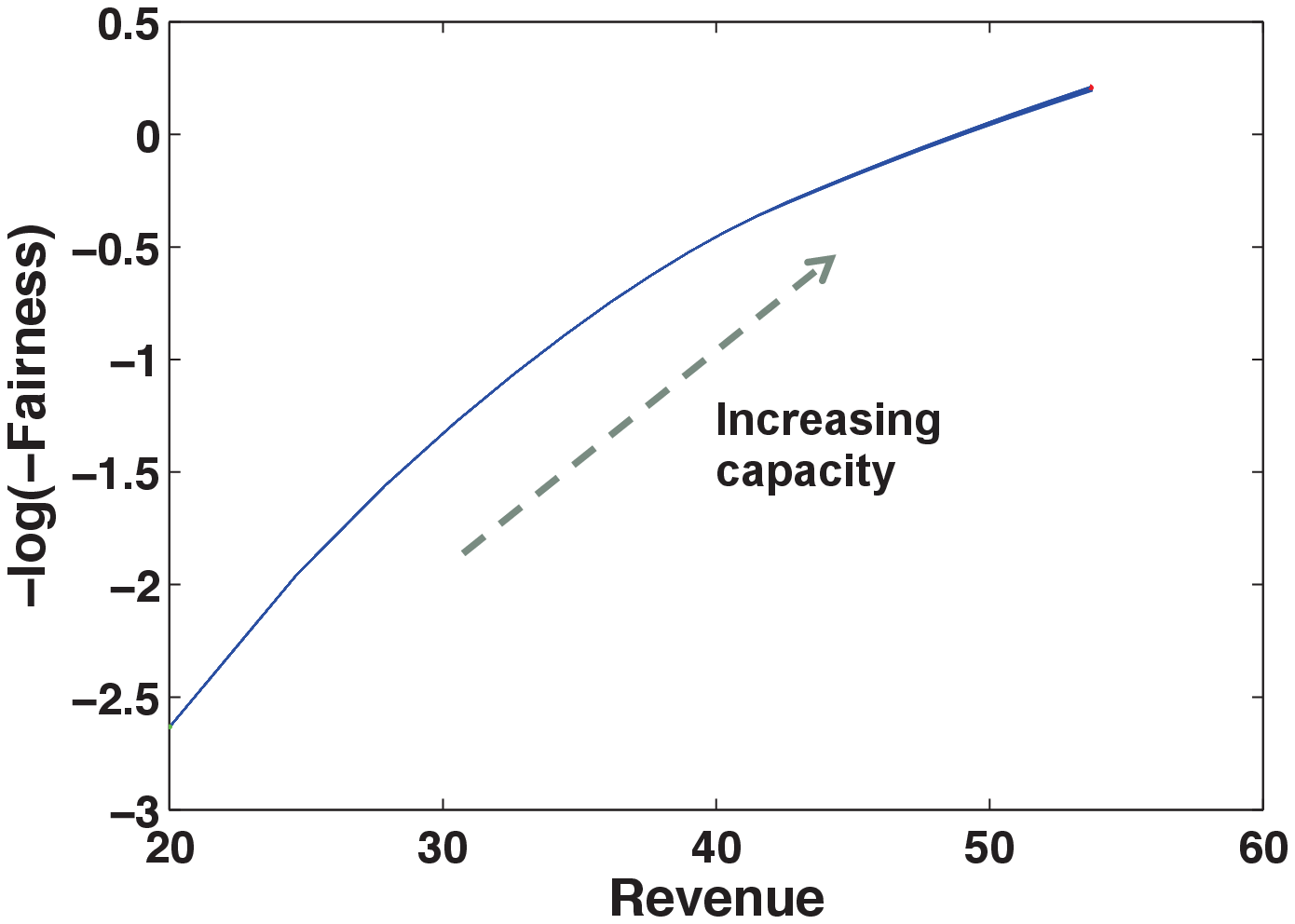}
		\caption{Resource pricing.}
		\label{fig:capacity_res_fair_rev}
	\end{subfigure}
	\hspace{-0.02\textwidth}
	\begin{subfigure}{0.34\textwidth}
		\centering
		\includegraphics[width = \textwidth]{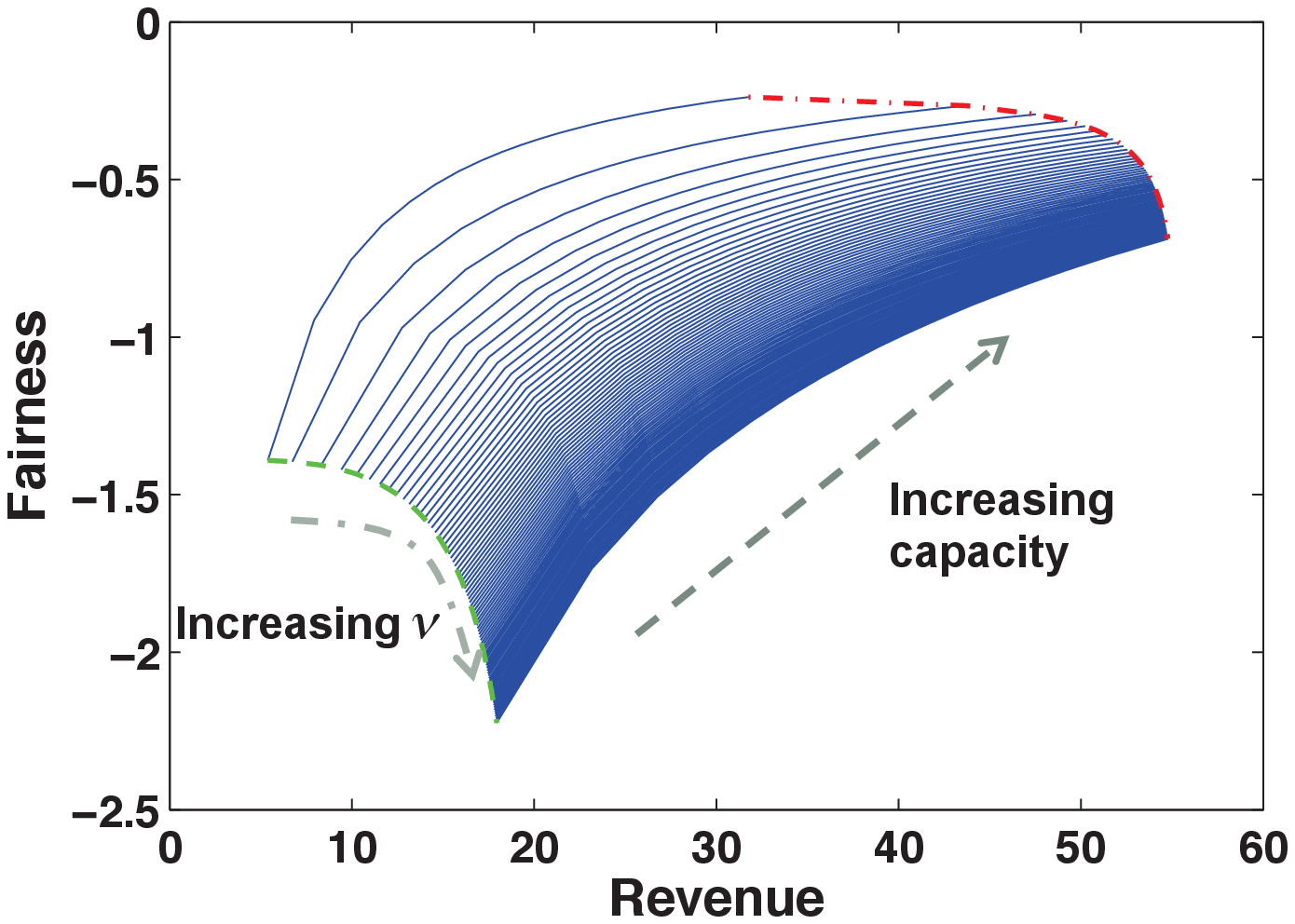}
		\caption{Differentiated pricing.}
		\label{fig:capacity_diffnu_fair_rev}
	\end{subfigure}
	\hspace{-0.02\textwidth}
	\vspace{-0.05in}
	\caption{Achieved fairness and revenue for a range of revenue weights $\nu$ in (\ref{eq:opt}) and memory capacities. On each contour (individual contours only visible for differentiated pricing), $\nu$ is fixed and the capacities are varied. For differentiated pricing, the contours reveal a range of tradeoffs for each fixed capacity and varied weight $\nu$.}
	\label{fig:capacity_fair_rev}
	\vspace{-0.05in}
\end{figure*}
\begin{figure*}[t]
\centering
	\begin{subfigure}{0.34\textwidth}
		\centering
		\includegraphics[width = \textwidth]{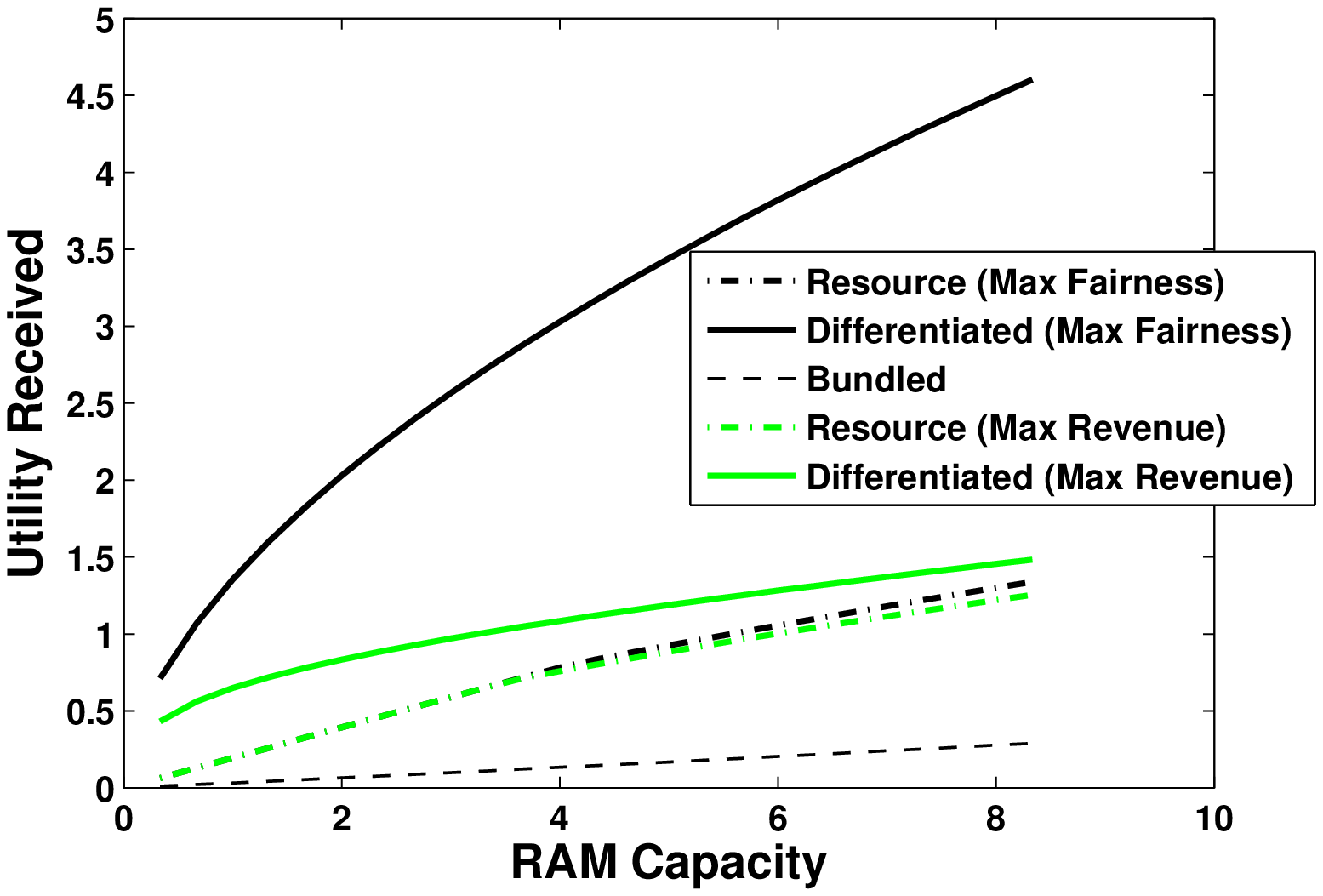}
%		\vspace{-0.05in}
		\caption{Type 1.}
		\label{fig:capacity_1_utility}
	\end{subfigure}
	\hspace{-0.02\textwidth}
	\begin{subfigure}{0.34\textwidth}
		\centering
		\includegraphics[width = \textwidth]{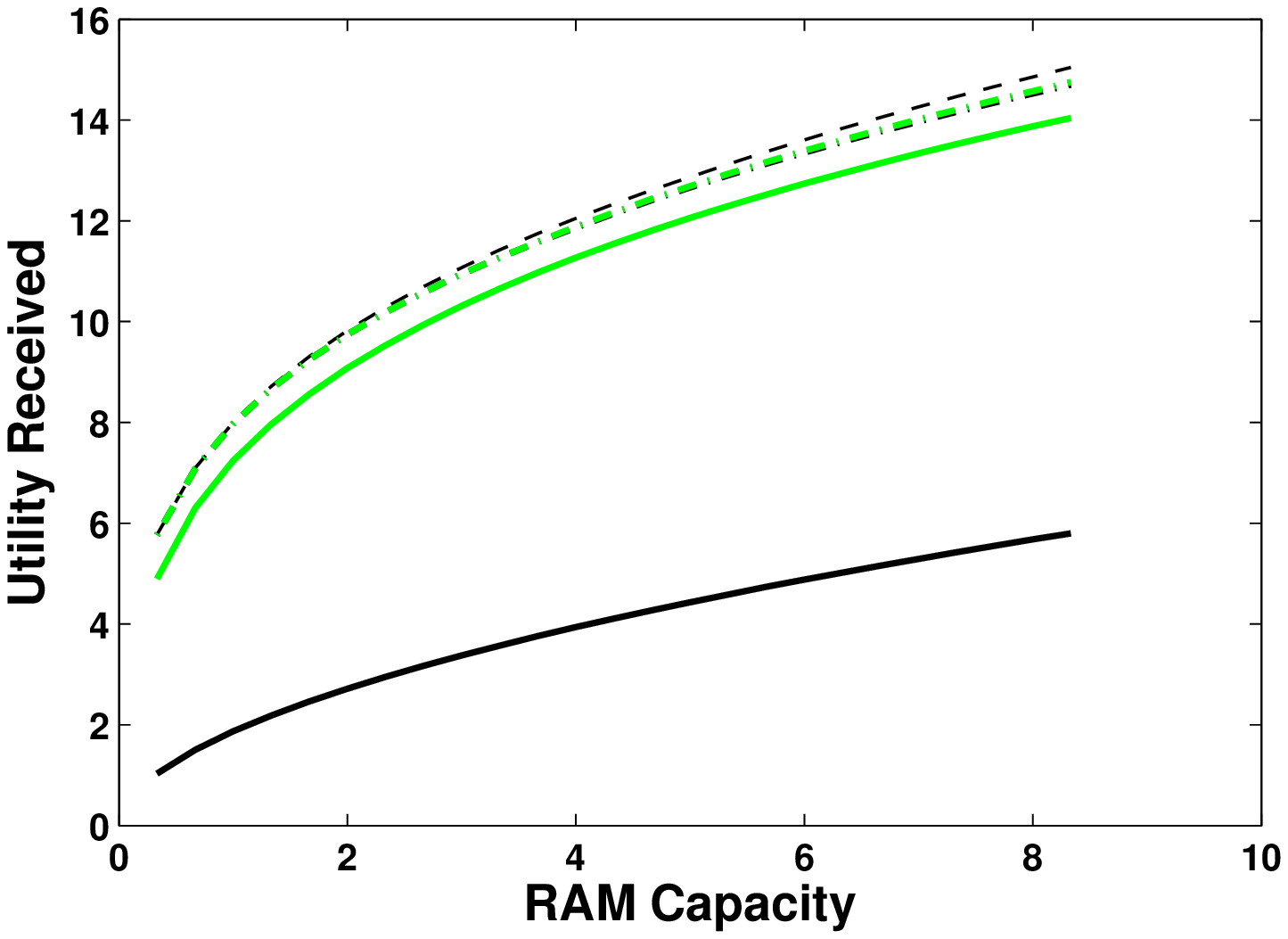}
%		\vspace{-0.05in}
		\caption{Type 2.}
		\label{fig:capacity_2_utility}
	\end{subfigure}
	\hspace{-0.02\textwidth}
	\begin{subfigure}{0.34\textwidth}
		\centering
		\includegraphics[width = \textwidth]{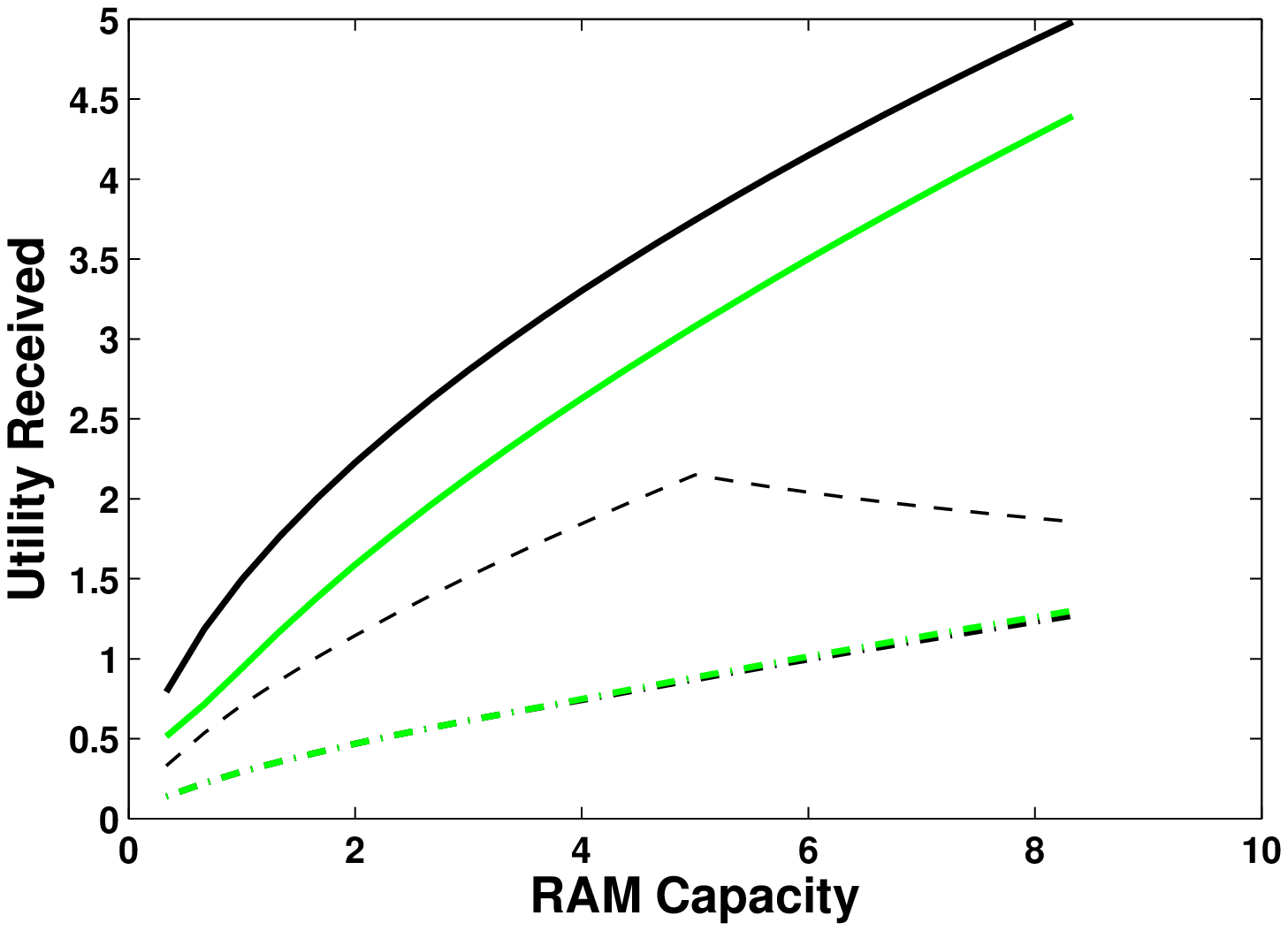}
%		\vspace{-0.05in}
		\caption{Type 3.}
		\label{fig:capacity_3_utility}
	\end{subfigure}
	\hspace{-0.02\textwidth}
	\vspace{-0.05in}
	\caption{Utility received by the three user types as the RAM (memory) capacity varies for maximum fairness or revenue.}
	\label{fig:capacity_utility}
	\vspace{-0.15in}
\end{figure*}

We first consider the effects of resource capacity on the operator's revenue and fairness in Section \ref{sec:capacity}, and then show that these tradeoffs can change dramatically with the distribution of users (Section \ref{sec:dist}). Section \ref{sec:discounts} then considers the effects of volume discounts. Throughout our discussion, we take the fairness parameter $\beta$ in (\ref{eq:opt}) to be relatively large, at $\beta = 20$; this choice of $\beta$ allows us to approximate max-min fairness, i.e., maximizing the minimum utility value.  Larger values of $\beta$ thus impose a stricter fairness requirement than lower ones, and have been used previously as fairness benchmarks in multi-resource allocation \cite{ghodsi2011dominant,joe2012multi}.

\subsection{Resource Capacity}\label{sec:capacity}

We first examine the effect of the available capacity on the operator's achieved revenue and fairness. We vary the memory capacity from 1/3 to 8 and use the interior-point algorithm ({\it cf}. Algorithm \ref{alg:intpoint}) to solve for the optimal values of the bundled, resource, and differentiated prices for different revenue weights $\nu$ in (\ref{eq:opt}). Figure \ref{fig:capacity_fair_rev} shows the achieved revenue and fairness for each pricing plan. As the memory capacity increases in Fig. \ref{fig:capacity_fair_rev}, the operator's fairness and efficiency under each pricing plan both improve: the operator has more memory to allocate, and thus more flexibility in setting the prices. Mathematically, we see that one constraint in the optimization problem (\ref{eq:opt}) has been relaxed, allowing the operator to increase the value of its objective function. We note that, as expected from Corollary \ref{cor:bundle}, the achieved revenue and fairness do not change with $\nu$ for bundled pricing. The optimal bundled price, which maximize both revenue and fairness, is simply the lowest price consistent with the operator's resource constraint.

%\begin{figure}
%	\centering
%	\includegraphics[width = 0.35\textwidth]{Matlab/Figures/Capacity_Diff_FixedParam_Fairness_Revenue.eps}
%	\caption{Achieved fairness and revenue under differentiated pricing. Capacity is fixed on each contour, while revenue weights $\nu$ are varied.}
%	\label{fig:capacity_diffparam_fair_rev}
%\end{figure}

We next compare resource (Fig. \ref{fig:capacity_res_fair_rev}) and differentiated (Fig. \ref{fig:capacity_diffnu_fair_rev}) pricing. Though differentiated pricing adds only a single additional price variable, it results in a significantly larger fairness for the operator, especially at low memory capacities. Moreover, varying the revenue weight $\nu$ results in a much richer set of tradeoffs between fairness and revenue for differentiated, rather than resource, pricing. This result may reflect the large heterogeneity in users' resource requirements: while both type 1 and type 2 users require a very large amount of memory relative to their CPU requirements, type 2 users require significantly less memory and fewer CPUs than type 1 users. If the operator offers resource pricing, type 1 users are then very affected by the memory price, and their utility will be adversely impacted by an increase in this price. In fact, type 1 and type 3 users also require more CPUs per job than type 2 users, so their utility levels will also be adversely impacted by increases in the CPU price. At the same time, type 2 users will process many jobs due to their jobs' relatively low resource requirements cost, allowing the operator to extract more revenue from type 2 users. Under differentiated pricing, the operator can manipulate the prices of individual users to remove these constraints imposed by resource heterogeneity, allowing the operator to increase the fairness across users without a large decreases in revenue.
% Thus, in order for the operator to increase fairness, the operator may set a low memory price so that users of type 1 can process more jobs. Due to type 2 users' low memory requirements, however, this price cannot be too low, or type 2 users' memory requirements will result in demand for memory exceeding capacity. As memory capacity increases, the operator has more room to maneuver, resulting in the observed tradeoffs for differentiated pricing. Moreover, the CPU price cannot be used to fully offset the low price for memory; type 3 users require nearly as much memory as CPUs, so for low CPU prices, type 3 users will exceed the CPU demands.

\begin{figure*}[t]
\centering
	\begin{subfigure}{0.34\textwidth}
		\centering
		\includegraphics[width = \textwidth]{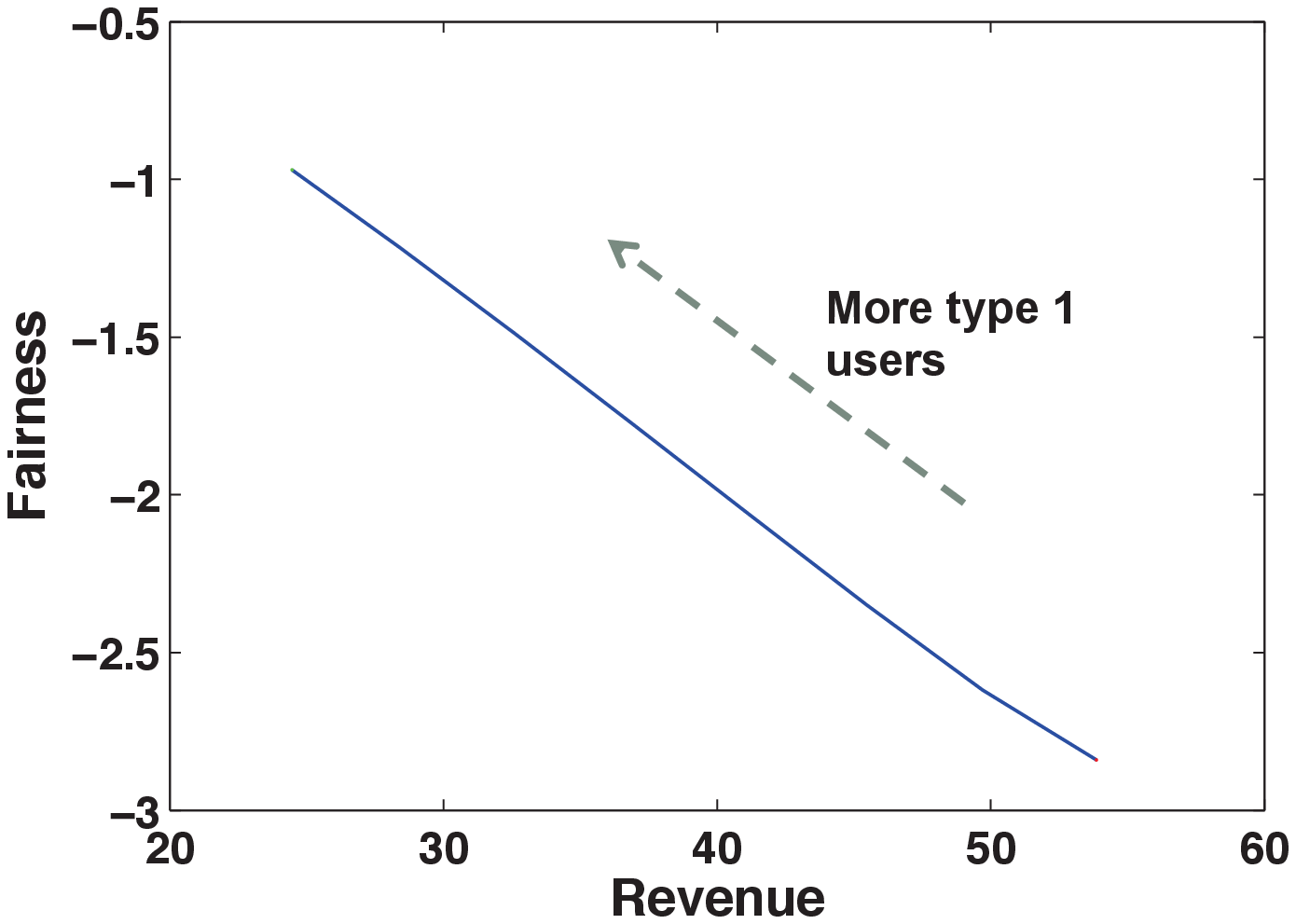}
%		\vspace{-0.05in}
		\caption{Bundled pricing.}
		\label{fig:type_bund_fair_rev}
	\end{subfigure}
	\hspace{-0.02\textwidth}
	\begin{subfigure}{0.34\textwidth}
		\centering
		\includegraphics[width = \textwidth]{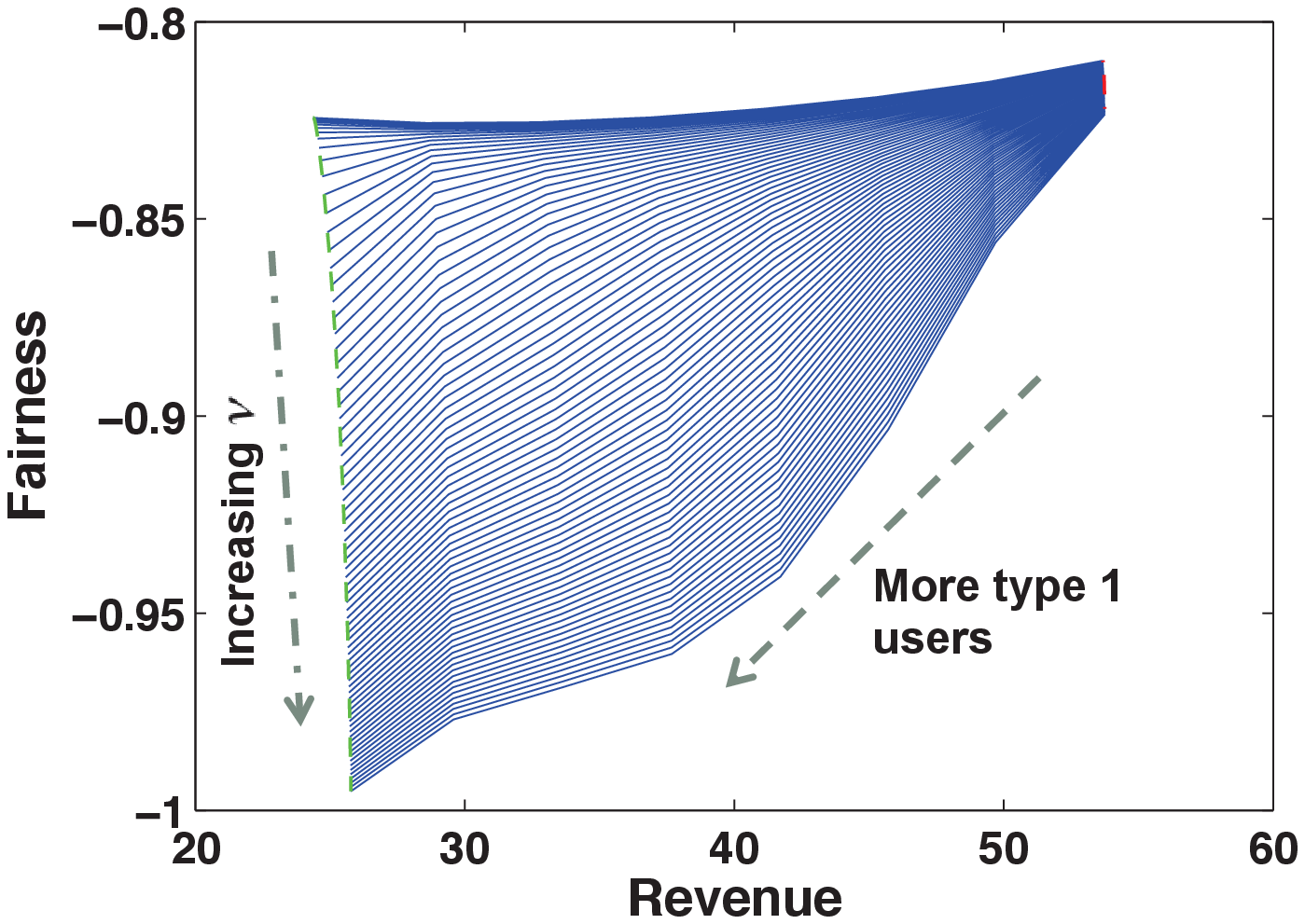}
%		\vspace{-0.05in}
		\caption{Resource pricing.}
		\label{fig:type_res_fair_rev}
	\end{subfigure}
	\hspace{-0.02\textwidth}
	\begin{subfigure}{0.34\textwidth}
		\centering
		\includegraphics[width = \textwidth]{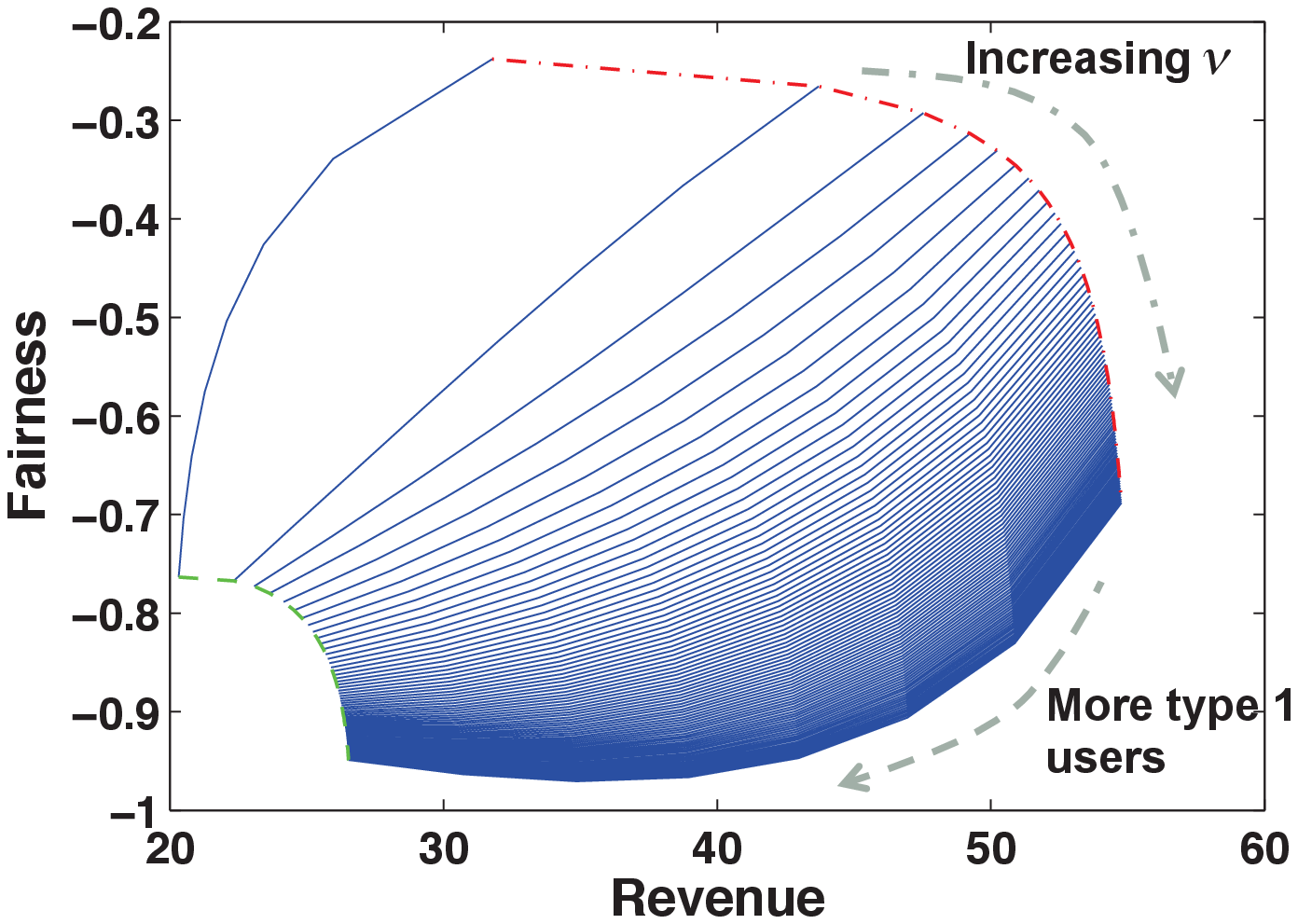}
%		\vspace{-0.05in}
		\caption{Differentiated pricing.}
		\label{fig:type_diff_fair_rev}
	\end{subfigure}
	\hspace{-0.02\textwidth}
	\vspace{-0.05in}
	\caption{Achieved fairness and revenue for a range of revenue weights $\nu$ and distribution of user types. On each contour (individual contours not visible in (a)), the weight $\nu$ is fixed and the fraction of type 1 users varies. For (b) resource and (c) differentiated pricing, the contours reveal a range of tradeoffs for each distribution of user types and varied revenue weights $\nu$.}
	\vspace{-0.05in}
	\label{fig:type_fair_rev}
\end{figure*}

We can validate the above explanation of the operator's low achieved fairness for bundled and resource pricing by calculating the utility received by each user under the three pricing plans. Figure \ref{fig:capacity_utility} shows the utility values for each user when only fairness is optimized (i.e., $\nu = 0$ in (\ref{eq:opt})), and when only revenue is optimized ($\nu\rightarrow\infty$). We see that type 2 users receive more utility than users of types 1 or 3; since type 2 users have far smaller resource requirements, this result is unsurprising. Users 1 and 3, however, receive almost the same utility under bundled and resource pricing, which is much less than that under differentiated pricing. As the capacity increases, however, their utility levels under resource pricing become visibly larger than those under bundled pricing, corroborating Fig. \ref{fig:capacity_fair_rev}'s higher fairness values under resource, rather than bundled, pricing.

\subsection{User Heterogeneity}\label{sec:dist}

We next examine whether the dramatic changes in fairness for different pricing plans is still observed when users are less heterogeneous. We vary the fraction of type 1 and type 2 users varies (we fix the fraction of type 3 users to be 10\%) and fix the memory capacity to be 6 units. Figure \ref{fig:type_fair_rev} shows the resulting fairness and revenue tradeoffs; we see that a visible range of tradeoffs exists for both resource and differentiated pricing, and that the fairness for different revenue weights $\nu$ is much less negative than in Fig. \ref{fig:capacity_fair_rev}. As the fraction of type 1 users increases, the revenue decreases under all pricing plans: type 1 users require more resources than type 2 users, so the operator receives more revenue from using its available resources to process type 2 users' jobs. Under bundled pricing, the fairness also increases with the fraction of type 1 users; all type 1 users receive the same (lower than type 2) utility values, increasing equitability. Under resource or differentiated pricing, however, increasing the fraction of type 1 users decreases the fairness. To achieve a higher revenue, the operator processes more jobs from type 2 users, even when more type 1 users are present.

\begin{figure*}[t]
\centering
	\begin{subfigure}{0.34\textwidth}
		\centering
		\includegraphics[width = \textwidth]{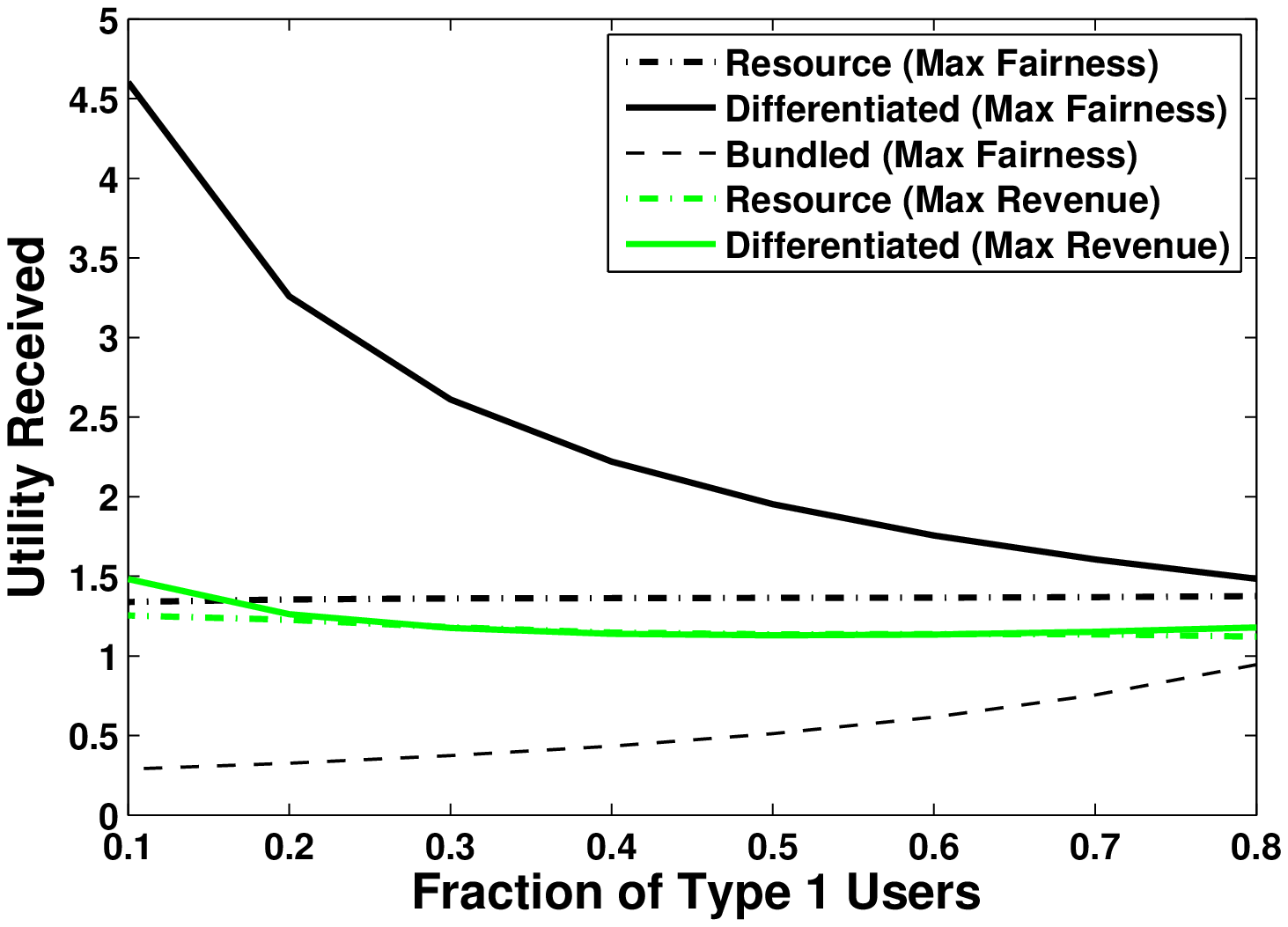}
		\caption{Type 1.}
		\label{fig:type_1_utility}
	\end{subfigure}
	\hspace{-0.02\textwidth}
	\begin{subfigure}{0.34\textwidth}
		\centering
		\includegraphics[width = \textwidth]{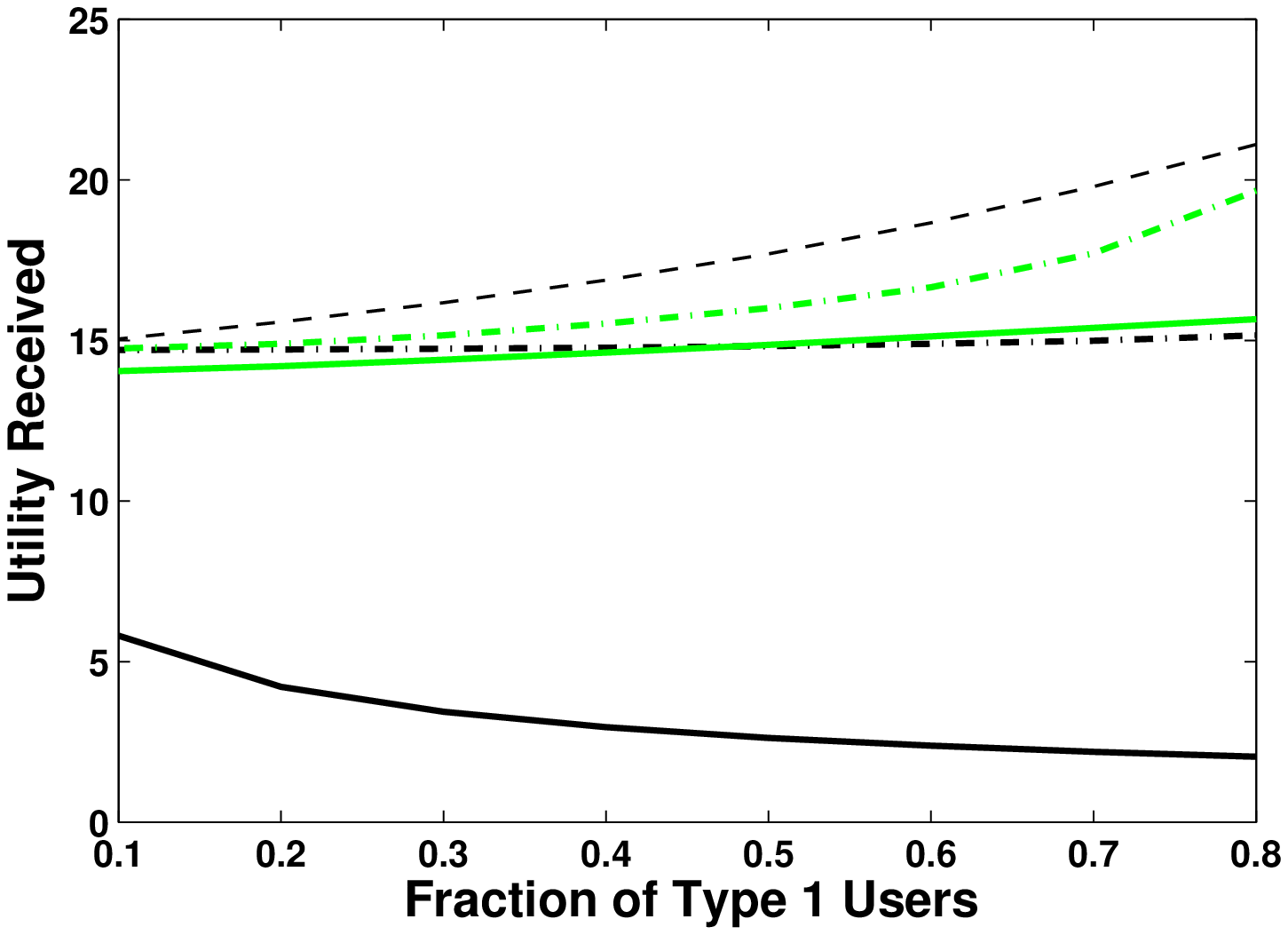}
		\caption{Type 2.}
		\label{fig:type_2_utility}
	\end{subfigure}
	\hspace{-0.02\textwidth}
	\begin{subfigure}{0.34\textwidth}
		\centering
		\includegraphics[width = \textwidth]{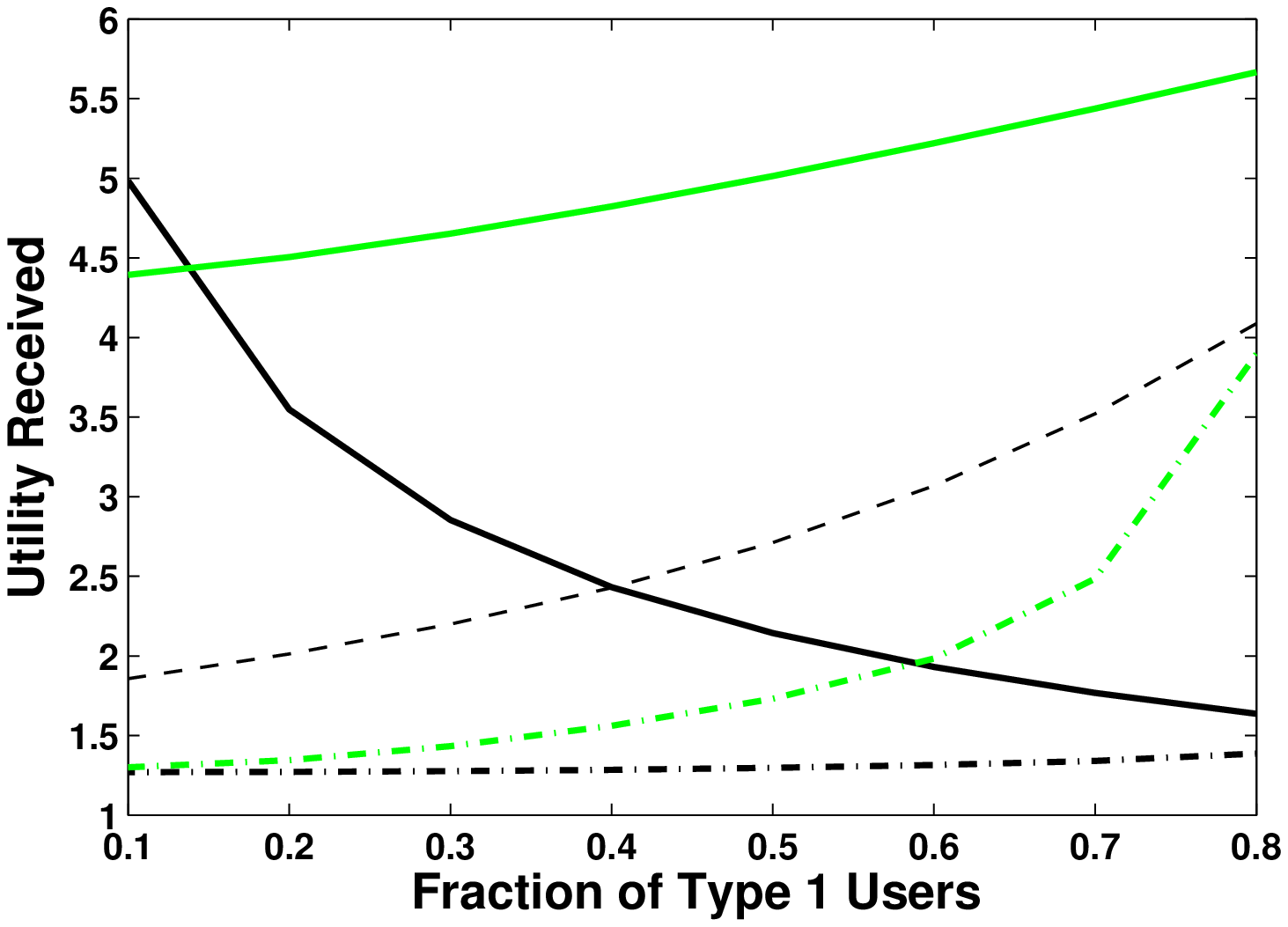}
		\caption{Type 3.}
		\label{fig:type_3_utility}
	\end{subfigure}
	\hspace{-0.02\textwidth}
	\vspace{-0.05in}
	\caption{Utility received by the three user types as the fraction of type 1 and 2 users varies, maximum fairness or revenue.}
	\label{fig:type_utility}
	\vspace{-0.1in}
\end{figure*}

As expected from Corollary \ref{cor:bundle}, varying the revenue weight $\nu$ has no effect on the utility or revenue from bundled pricing. Under resource or differentiated pricing, however, as the revenue weight $\nu$ increases in Fig. \ref{fig:type_fair_rev}, the operator's revenue increases for any given fraction of type 1 users. However, this increase in revenue results in a decrease in fairness: in particular, when a low fraction of users is present in resource pricing (e.g., the leftmost contour in Fig. \ref{fig:type_res_fair_rev}), the revenue barely increases, while the fairness drops steeply as $\nu$ increases. We can explain this result by examining the utility received by each user for different fractions of type 1 users, as shown in Fig. \ref{fig:type_utility}.

\begin{figure*}[t]
\centering
	\begin{subfigure}{0.34\textwidth}
		\centering
		\includegraphics[width = \textwidth]{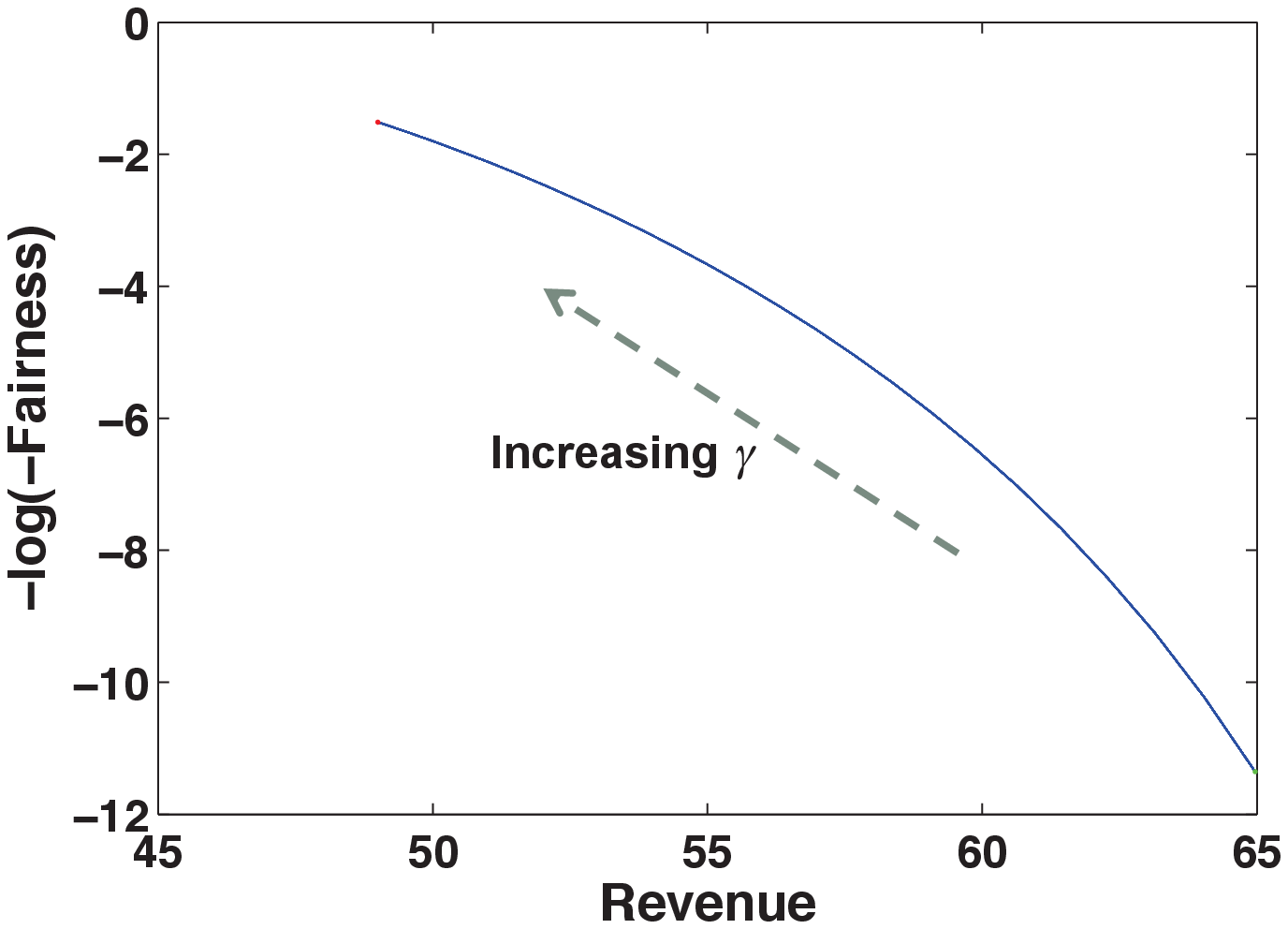}
%	\vspace{-0.05in}
		\caption{Bundled pricing.}
		\label{fig:gamma_bund_fair_rev}
	\end{subfigure}
	\hspace{-0.02\textwidth}
	\begin{subfigure}{0.34\textwidth}
		\centering
		\includegraphics[width = \textwidth]{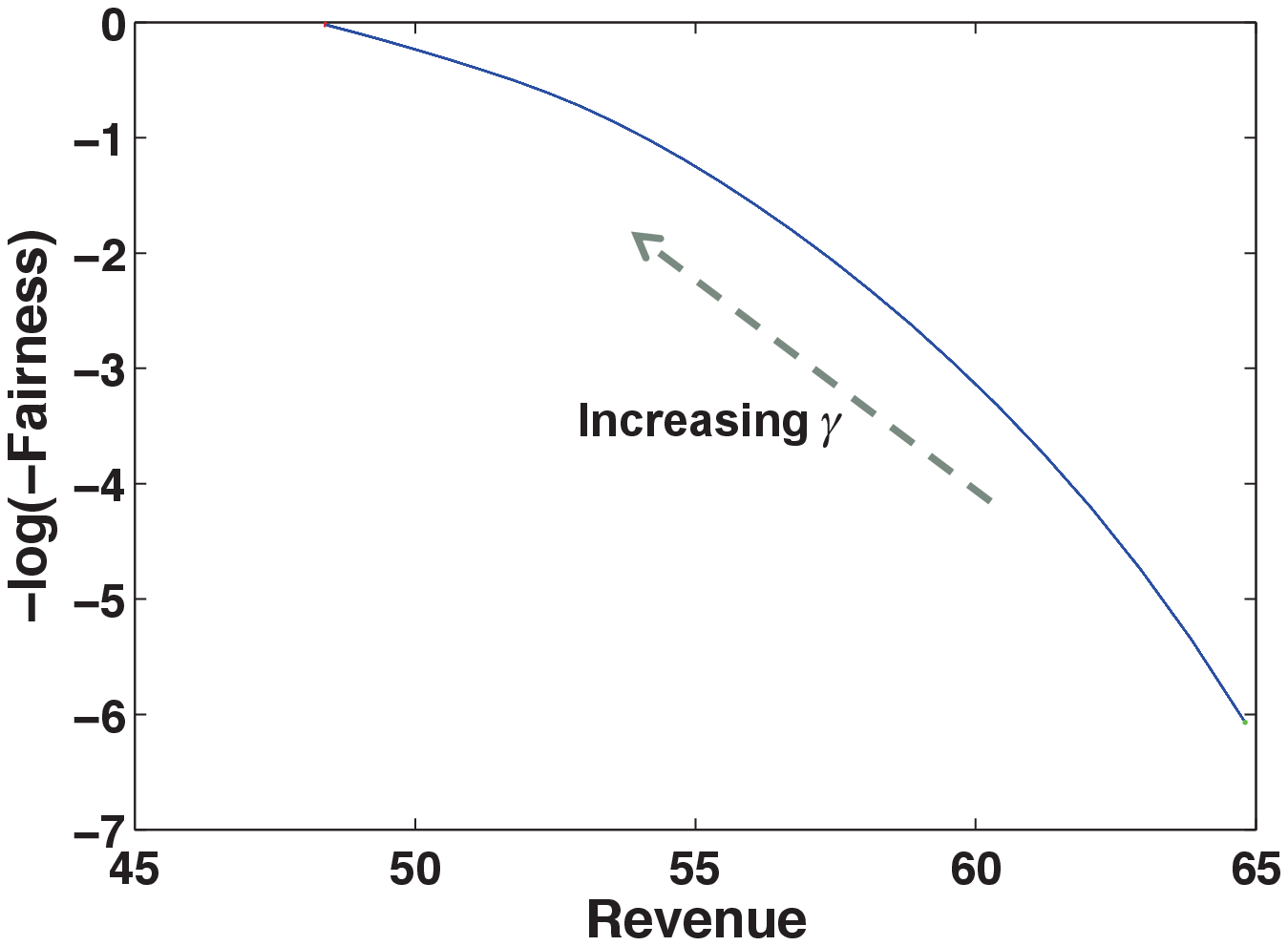}
%	\vspace{-0.05in}
		\caption{Resource pricing.}
		\label{fig:gamma_res_fair_rev}
	\end{subfigure}
	\hspace{-0.02\textwidth}
	\begin{subfigure}{0.34\textwidth}
		\centering
		\includegraphics[width = \textwidth]{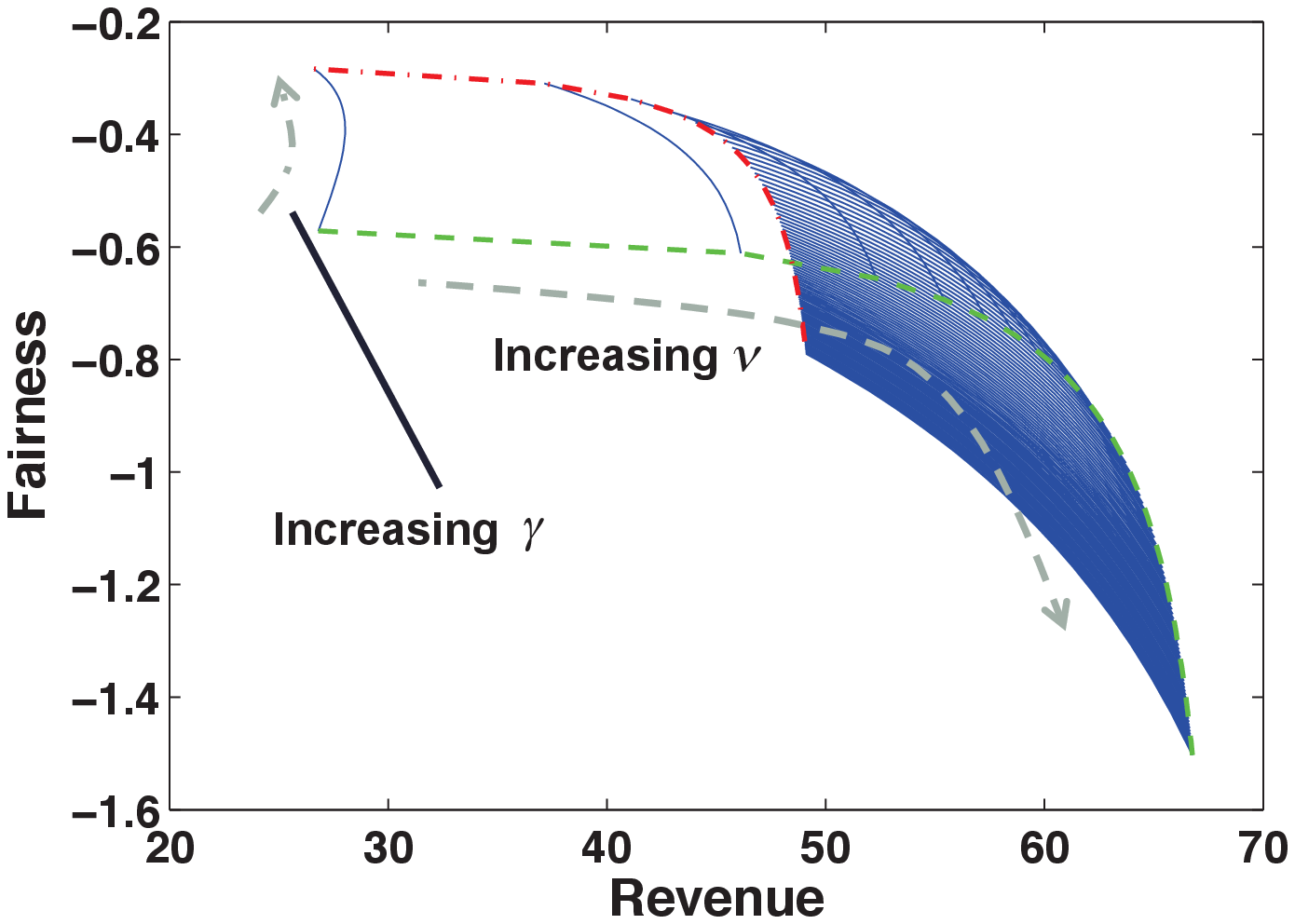}
%	\vspace{-0.05in}
		\caption{Differentiated pricing.}
		\label{fig:gamma_diff_fair_rev}
	\end{subfigure}
	\hspace{-0.02\textwidth}
	\vspace{-0.05in}
	\caption{Achieved fairness and revenue for a range of revenue weights $\nu$ and volume discounts $\gamma$. On each contour (individual contours only visible for differentiated pricing), $\nu$ is fixed and $\gamma$ varies. For (a) bundled and (b) resource pricing, the optimal fairness and revenue values do not change significantly with $\nu$. For (c) differentiated pricing, the contours reveal a range of tradeoffs for each fixed $\gamma$ and varied $\nu$.}
	\label{fig:gamma_fair_rev}
	\vspace{-0.15in}
\end{figure*}

We see from Fig. \ref{fig:type_utility} that as the fraction of type 1 users increases, the minimum utility over all three users also decreases, though type 2 and type 3 users actually receive more utility under bundled and resource pricing when the revenue is maximized. This is offset, however, by a decrease in utility for type 1 users, which results in a decrease in the minimum utility across users. This result likely comes from the fact that type 1 users require more resources to process their increased number of jobs, as the fraction of type 1 users increases. For bundled pricing, however, the minimum utility increases when either revenue or fairness is maximized, likely because type 1 users, who had previously received lower utilities than type 2 or 3 users, could now process more jobs. When fewer type 1 users are present, the price of type 1 users' jobs under resource or differentiated pricing may be set so that these users receive slightly higher utility values; since there are relatively few type 1 users, they do not take away many resources from the other types of users.

% We next change users' resource requirements so that they are perfectly symmetric...

% then suppose that users have the same number of What happens when we alter the distribution of users? Do the same qualitative results hold? What happens if we consider a strictly symmetric scenario? ({\color{red} maybe just show the fairness-revenue tradeoff})

\subsection{Volume Discounts}\label{sec:discounts}

We conclude by examining a parameter set by the operator: the volume discount $\gamma$. Figure \ref{fig:gamma_fair_rev} shows the fairness-revenue tradeoffs for a range of volume discounts $\gamma$. 

We see that for both bundled and resource pricing, varying $\nu$ in the operator's objective function (\ref{eq:opt}) does not yield an appreciable tradeoff (for bundled pricing, this result is again Corollary \ref{cor:bundle}). Moreover, as $\gamma$ decreases, the fairness decreases quite rapidly under both these pricing plans; once the revenue exceeds 50, the fairness value decreases exponentially. Differentiated pricing allows a wider range of much higher fairness values. As the revenue weight $\nu$ increases from 0, the revenue under differentiated pricing increases rapidly, while the range of achieved fairness stays roughly the same; though each contour represents a linear increase in $\nu$, the resulting increase in revenue, measured by the ``distance'' between the contours, is very pronounced. As $\nu$ increases further, the contours grow closer together, but the range of achieved fairness decreases more rapidly.

For larger revenue weights $\nu$, we find that the operator's revenue decreases with $\gamma$ for all three pricing plans, as a result of decreased demand from users. This demand decrease, however, affects users in different ways and acts to increase fairness. When $\nu = 0$ under differentiated pricing, the revenue does not decrease monotonically with $\gamma$; for a small volume discount ($\gamma$ near 1), the revenue initially increases and then decreases with $\gamma$. Unlike our results for varied memory capacities or distributions of users (Figs. \ref{fig:capacity_fair_rev} and \ref{fig:type_fair_rev}), the differentiated pricing contours for different $\nu$ actually overlap: the operator can achieve the same combination of fairness and revenue by optimizing two different weighted sums of fairness and revenue while offering different volume discounts.

We can explain the very negative fairness values observed in Figs. \ref{fig:gamma_bund_fair_rev} and \ref{fig:gamma_res_fair_rev} for bundled and resource pricing by examining the equitability and efficiency components of the fairness function (\ref{eq:betafairness}). We see that the equitability is quite negative, even for differentiated pricing, where the fairness values are relatively moderate (Fig. \ref{fig:gamma_diff_fair_rev}). Thus, the very low fairness values observed in Fig. \ref{fig:gamma_fair_rev} may be attributed more to lack of equitability in the utilities received by different users rather than a low total utility.

The relatively low efficiency values observed in Fig. \ref{fig:gamma_equ_eff} lead us to consider another measure of efficiency: leftover resource capacity. Though leftover capacity is not explicitly accounted for in the user's optimization problem (\ref{eq:opt})--capacity only enters through the resource constraints--the leftover capacity may indicate the potential additional revenue that an operator could achieve, e.g,. if users' demand functions change due to different volume discounts offered. We find that for all volume discounts $\gamma$ and weights $\nu$, only CPU cores are leftover; as the revenue increases, the leftover capacity increases for both resource and differentiated pricing. Thus, a volume discount reduces the ``wasted'' capacity by increasing user demand and operator revenue. For bundled pricing, however, we have the opposite scenario: as the revenue and overall user demand increase, the leftover capacity also increases. With bundled pricing, users are forced to purchase excess capacity according to the ratio of resources in the resource bundle. As the volume discount increases, user demand goes up and users purchase more excess capacity, which is included in the leftover capacity shown in Fig. \ref{fig:gamma_equ_eff}. This result thus highlights the ``inefficiency'' inherent in forcing users to pay for resources they do not need.

\begin{figure}
\centering
\includegraphics[width = 0.37\textwidth]{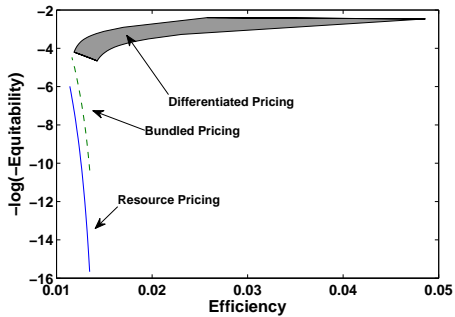}
	\vspace{-0.05in}
\caption{Equitability and efficiency components of the fairness function (\ref{eq:betafairness}) as the revenue weight $\nu$ varies in the operator's objective function (\ref{eq:opt}).}
\label{fig:gamma_equ_eff}
	\vspace{-0.15in}
\end{figure}

\begin{figure}
\centering
\includegraphics[width = 0.42\textwidth]{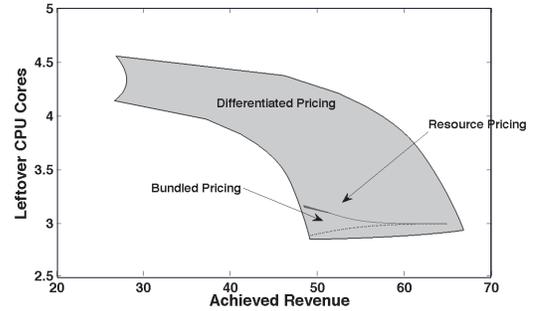}
	\vspace{-0.05in}
\caption{Leftover CPU capacity versus achieved revenue as the volume discount $\gamma$ and revenue weight $\nu$ vary in the operator's objective function (\ref{eq:opt}).}
\label{fig:gamma_cap_rev}
\vspace{-0.15in}
\end{figure}

% Finally, we vary the memory capacity as in Section \ref{sec:capacity} and measure the leftover capacity. We find that as when the volume discount varies in Fig. \ref{fig:gamma_cap_rev}, the leftover capacity decreases with achieved revenue for differentiated pricing (Fig. \ref{fig:capacity_cap_rev}.

%\begin{figure}
%\centering
%\includegraphics[width = 0.35\textwidth]{Matlab/Figures/Capacity_Capacity_Revenue.eps}
%\caption{Leftover CPU capacity versus achieved revenue as the memory capacity and revenue weight $\nu$ vary in the operator's objective function (\ref{eq:opt}).}
%\label{fig:capacity_cap_rev}
%\end{figure}

\section{Conclusion and Future Work}\label{sec:conclude}

In this work, we develop a mathematical framework for evaluating three cloud pricing strategies--bundled, resource, and differentiated pricing--in terms of their resulting fairness and revenue. We first characterize user demand for resources as a function of the prices offered under these different pricing plans, and then find the resulting revenue. We next formulate a family of metrics to measure the fairness of the offered prices and analyze the tradeoff between fairness and revenue. After showing some analytical bounds on this tradeoff, we finally use data taken from a Google cluster to numerically evaluate the impact of resource capacity and volume discounts on the operator's fairness-revenue tradeoff. We find that differentiated pricing, even if it only adds one additional price variable compared to resource pricing, can significantly increase the fairness under bundled or resource pricing, though revenue remains comparable across all pricing strategies. The operator can also increase both fairness and revenue through an increase in resource capacity, but the magnitude of this benefit depends heavily on the heterogeneity of users' resource requirements. Finally, a lower volume discount can be used to increase fairness, but at a loss of some revenue and increase in unused resource capacity.

Throughout this work, we consider fairness and revenue solely in terms of prices. A real implementation, however, would allow us to explore a different aspect of fairness: user priorities. Instead of choosing prices so as to balance fairness and revenue, an operator could instead leverage the temporal dimension of job processing in datacenters to orthognalize an operator's desire for fairness with its incentive to increase revenue. Users receiving lower utility values could be compensated with higher priorities, allowing their jobs to be completed faster. By implementing such a prioritized allocation scheme and devising algorithms for computing these priorities, we can evaluate whether setting user priorities, choosing job prices under one of many possible pricing plans, or a combination of both is most effective at balancing a datacenter operator's competing objectives of maximizing revenue and fairness.

\bibliographystyle{abbrv}
\bibliography{References}

\newpage

\appendix
% \section*{APPENDIX}

\section{Introducing Deadlines}\label{sec:deadlines}

In this section, we show that our formulation in Sections \ref{sec:user} - \ref{sec:fair_rev} may be altered to allow changes in user utilities and job demands over multiple time intervals. In formulating this problem, we account for varying job deadlines: users specify a completion deadline when a job is submitted, and the operator takes these into account when setting prices for each time interval.  We suppose that the operator sets different prices in each time interval; then if the price optimization for the single interval is convex (e.g., as in Prop. \ref{prop:concave}), so is the optimization over prices in all time intervals. We derive this extension only for resource pricing, but similar methods can be used for bundled or differentiated pricing.

We suppose that when users submit jobs to the datacenter operator in each time interval, they specify the maximum time to complete these jobs as well as the job resource requirements that, if met, will ensure that the job is completed before the deadline. We suppose that it does not matter when a user receives the requested resources, so long as the resources are allocated in the requested ratio. For instance, a job with a deadline of three time intervals and a requirement of 6 CPU slots and 3 GB of RAM might be allocated 2 slots and 1 GB in each time interval; or 2 slots and 1 GB in the first, no resources in the second, and 4 slots and 2 GB in the third time interval; either allocation allows the job to complete before the deadline. In each time interval, each user submits one type of job to the datacenter (i.e., $R_{ij}$ is the same for all jobs of user $j$ and all resources $i$ submitted during a given time interval). If a user has multiple types of jobs to run, these may be treated as coming from distinct users. One could also group users with similar per-job resource requirements.

We consider a finite time horizon of $T$ time intervals, indexed by $s = 1,2,\ldots, T$, and assume that all job deadlines fall before the final time $T$. At the beginning of each time interval $s$, the operator sets the resource prices $p_{i,s}$ for each resource $i$; denoting the vector of resource prices in period $s$ by ${\bf p_s}$, each user $j$ then submits $x^\star_{j,s}({\bf p_s})$ number of jobs to the operator, with associated deadline $\tau_{j,s}$. These jobs are charged according to the price ${\bf p_s}$, though the resource prices may change before the job completes. Users choose the number of jobs so as to maximize their utility in period $s$, and do not account for any jobs that have been already submitted or may be submitted in future periods. For simplicity, we suppose that for each user $j$, all of user $j$'s jobs submitted in a given period $s$ have the same deadline.

We now introduce the variables $x_{j,s}(t)$, $s < t$, to denote the number of jobs submitted by user $j$ in period $s$ that are processed during time interval $t$. To ensure that all deadlines are met, we introduce the constraints
\begin{equation}
\sum_{t = s}^{\tau_{j,s}} x_{j,s}(t) \geq x^\star_{j,s}({\bf p_s})
\label{eq:deadline}
\end{equation}
for each user $j$ and time interval $s$; since $x^\star_{j,s}$ is convex in the prices ${\bf p_s}$ by Lemma \ref{lem:resourceconvex}, this constraint defines a convex set. The resource constraints for each period $t$ then become
\begin{equation}
\sum_{s = 1}^t\sum_{j = 1}^n R_{ij}(s)x_{j,s}(t) \leq C_i,
\label{eq:resource_deadline}
\end{equation}
where $R_{ij}(s)$ denotes the per-job requirement of resource $i$ for jobs submitted by user $j$ in the time interval $s$. We note that these constraints are affine in the variables $x_{j,s}(t)$; thus, the constraints (\ref{eq:deadline}) and (\ref{eq:resource_deadline}) together define a convex set in the variables ${\bf p_s}$ and $x_{j,s}(t)$. We then find that the revenue with deadlines may be expressed as
\begin{equation}
\sum_{s = 1}^T \left(\sum_{j = 1}^n {x_{j,s}^\star}^\gamma\sum_{i = 1}^m R_{ij}(s)p_{i,s}\right).
\label{eq:revenue_deadlines}
\end{equation}
The fairness may be thought of as the sum of the fairness of the utilities received in each time interval; the fairness function is thus only indirectly dependent on the deadlines $\tau_{j,s}$. Users specify the deadlines independent of the prices offered, so these should not affect their perceived utility or fairness. Thus, (\ref{eq:betafairness}) yields the fairness function
\begin{equation}
\frac{1}{1 - \beta}\sum_{s = 1}^T \left(\sum_{j = 1}^n \overline{U}_j\left(x_{j,s}^\star({\bf p_s}), {\bf p_s}\right)^{1 - \beta}\right).
\label{eq:fair_deadlines}
\end{equation}
We note that since both the revenue (\ref{eq:revenue_deadlines}) and fairness (\ref{eq:fair_deadlines}) are simply the revenue and fairness in each time interval $s$ added together, we may apply Prop. \ref{prop:concave} to formulate a weighted sum (with weights possibly depending on the time interval) of both the revenue and fairness that is a concave function. We then have the following result:
\begin{thm}
The optimization problem
\begin{align}
\max_{x_{j,s}(t), p_{i,s}}&\;\sum_{s = 1}^T \nu(s)\left(\sum_{j = 1}^n {x_{j,s}^\star}^\gamma\sum_{i = 1}^m R_{ij}(s)p_{i,s}\right) \nonumber \\
&+ \frac{1}{1 - \beta}\left(\sum_{j = 1}^n \overline{U}_j\left(x_{j,s}^\star({\bf p_s}), {\bf p_s}\right)^{1 - \beta}\right) \\
{\rm s.t.}&\;\sum_{s = 1}^t\sum_{j = 1}^n R_{ij}(s)x_{j,s}(t) \leq C_i;\;1\leq t\leq T;\;1\leq i \leq m \\
&\sum_{t = s}^{\tau_{j,s}} x_{j,s}(t) \geq x^\star_{j,s}({\bf p_s});\;1 \leq s \leq T;\; 1\leq j \leq n
\end{align}
is a convex optimization if, for each time interval $s$, the weight $\nu(s)$ is chosen so as to satisfy Prop. \ref{prop:concave} for the revenue and fairness in period $s$.
\end{thm}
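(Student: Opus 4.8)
The plan is to verify directly the two defining requirements of a convex program: that the objective being maximized is concave, and that the feasible region is convex, both in the joint variable vector consisting of all price vectors ${\bf p_s}$ together with all processing variables $x_{j,s}(t)$.

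First, for the objective, I would use the decomposition $\sum_{s=1}^T \big(\nu(s)\rho_s({\bf p_s}) + F_{\beta,s}({\bf p_s})\big)$, where $\rho_s$ and $F_{\beta,s}$ denote the single-interval revenue and fairness assembled from the period-$s$ demands $x_{j,s}^\star({\bf p_s})$, exactly as in (\ref{eq:revenue_deadlines}) and (\ref{eq:fair_deadlines}). Because $\nu(s)$ is chosen to satisfy the bound of Prop. \ref{prop:concave} for period $s$, each summand is concave in its own block ${\bf p_s}$. The key observations are then that (i) distinct summands involve disjoint blocks of price variables, (ii) none of the summands depends on the processing variables $x_{j,s}(t)$, so the objective is constant (hence trivially concave) in those directions, and (iii) a finite sum of functions each concave in the joint variables is concave. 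Thus the objective is concave on the entire variable space.

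Second, for the feasible set, I would treat the two constraint families separately. The resource constraints $\sum_{s=1}^t\sum_{j=1}^n R_{ij}(s)x_{j,s}(t) \leq C_i$ are affine in the $x_{j,s}(t)$ and do not involve the prices at all, so each defines a half-space. The deadline constraints $\sum_{t=s}^{\tau_{j,s}} x_{j,s}(t) \geq x_{j,s}^\star({\bf p_s})$ I would rewrite as $x_{j,s}^\star({\bf p_s}) - \sum_{t=s}^{\tau_{j,s}} x_{j,s}(t) \leq 0$. By Lemma \ref{lem:resourceconvex}, $x_{j,s}^\star$ is convex in ${\bf p_s}$; subtracting an affine function of the processing variables preserves convexity, so the left-hand side is a convex function of the joint variables and its zero-sublevel set is convex. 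The feasible region is the intersection of these convex sets and hence convex. Maximizing a concave objective over a convex feasible set is by definition a convex optimization problem, completing the argument.

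The result is in essence the routine assembly of Prop. \ref{prop:concave} and Lemma \ref{lem:resourceconvex} across the $T$ intervals, so I do not expect a serious obstacle; the one point genuinely requiring care is the sign structure of the deadline constraints. The inequality must point so that the convex demand $x_{j,s}^\star({\bf p_s})$ sits on the smaller side, turning the constraint into a convex-function-below-zero condition rather than a (generally non-convex) superlevel set. This is also precisely why coupling the price variables and the processing variables in a single constraint does no harm: the coupled function is convex-in-${\bf p_s}$ plus affine-in-$x_{j,s}(t)$, which is jointly convex.
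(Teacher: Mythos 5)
Your proof is correct and takes essentially the same route as the paper, whose own justification is the discussion preceding the theorem: the objective decomposes into per-interval revenue-plus-fairness terms, each concave by Prop.~\ref{prop:concave} under the stated choice of $\nu(s)$, while the deadline constraints define convex sets because $x^\star_{j,s}$ is convex in ${\bf p_s}$ by Lemma~\ref{lem:resourceconvex} and the resource constraints are affine in the $x_{j,s}(t)$. Your explicit check that the convex demand $x^\star_{j,s}({\bf p_s})$ sits on the correct side of the deadline inequality is a detail the paper leaves implicit, but it is the same argument.
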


\section{Proofs}\label{sec:proofs}

\subsection{Proposition \ref{prop:concave}}
\begin{proof}
As in the proof of Lemma \ref{lem:resourceconvex}, we show the result for resource pricing; the proofs for bundled or differentiated pricing are analogous. We first note that the second derivative matrix of (\ref{eq:weighted}) may be written as
\begin{equation*}
\nu\nabla^2\rho + {\bf \nabla^2 F_\beta} = {\bf \left[R_{ij}^\gamma\right] Q\left[R_{ij}^\gamma\right]}^T,
\end{equation*}
where ${\bf Q}$ is a diagonal matrix with entries
\begin{align*}
Q_{jj} &= \nu\left(\gamma r_j({\bf p})\right)^{\frac{2\gamma + \alpha_j - 1}{1 - \alpha_j - \gamma}}\gamma^2\frac{1 - \alpha_j}{\left(1 - \alpha_j - \gamma\right)^2} \\
&+ \gamma^{\frac{(1 - \beta)\gamma}{1 - \alpha_j - \gamma}}\left(\frac{\gamma + \alpha_j - 1}{1 - \alpha_j}\right)^{-\beta}\frac{\beta(1 - \alpha_j) - \gamma}{1 - \alpha_j - \gamma} \\
&\times r_j^{\frac{(-1-\beta)(1 - \alpha_j) + 2\gamma}{1 - \alpha_j - \gamma}}
\end{align*}
and ${\bf \left[R_{ij}^\gamma\right]}$ is an $m\times n$ matrix with $(i,j)$ entry $R_{ij}^\gamma$. A sufficient condition for this second-derivative matrix to be negative-semidefinite is that $Q_{jj} < 0$ for each $j$. Thus, we find the condition
\begin{align*}
\nu \leq &\left(\frac{\gamma + \alpha_j - 1}{1 - \alpha_j}\right)^{1 - \beta}\gamma^{\frac{\beta\gamma}{\alpha_j + \gamma - 1} - 1}\big(\beta(1 - \alpha_j) - \gamma\big) \\
&\times \left(r_j\right)^{\frac{\beta(1 - \alpha_j)}{\alpha_j + \gamma - 1}}.
\end{align*}.
We can remove the price variables from this bound by observing that the resource constraints ${\bf Rx^\star} \leq {\bf C}$ imply that
\begin{equation*}
R_{ij}r_j^{\frac{\gamma}{1 - \alpha_j - \gamma}} \leq C_i
\end{equation*}
for each resource $i$ and user $j$; thus,
\begin{equation*}
r_j^{\frac{\beta(1 - \alpha_j)}{\alpha_j + \gamma - 1}} \geq \max_i\left(\frac{C_i}{R_{ij}}\right)^{\frac{\beta(\alpha_j - 1)}{\gamma}}.
\end{equation*}
We therefore find the sufficient condition
\begin{align*}
\nu \leq &\left(\frac{\gamma + \alpha_j - 1}{1 - \alpha_j}\right)^{1 - \beta}\gamma^{\frac{\beta\gamma}{\alpha_j + \gamma - 1} - 1}\big(\beta(1 - \alpha_j) - \gamma\big) \\
&\times \left(\max_i\left(\frac{C_i}{R_{ij}}\right)^{\frac{\beta(\alpha_j - 1)}{\gamma}}\right).
\end{align*}
\end{proof}

\subsection{Proposition \ref{prop:bound}}
\begin{proof}
Suppose that $\beta < 1$ in (\ref{eq:betafairness}). We first introduce the notation
\begin{equation*}
\nu_j = \frac{\gamma + \alpha_j - 1}{1 - \alpha_j} = \frac{\gamma}{1 - \alpha_j} - 1,
\end{equation*}
and find that the fairness may be written as
\begin{align*}
&\frac{1}{1 - \beta}\sum_{j = 1}^n \nu_j^{1 - \beta}\left(\frac{\gamma}{c_j}\right)^{\frac{\gamma(1 - \beta)}{1 - \alpha_j - \gamma}}r_j^{\frac{(1 - \alpha_j)(1 - \beta)}{1 - \alpha_j - \gamma}} \\
&\geq \frac{\min_j \nu_j^{1 - \beta}}{1 - \beta}\left(\sum_{j = 1}^n \left(\frac{\gamma}{c_j}\right)^{\frac{\gamma}{1 - \alpha_j - \gamma}}r_j^{\frac{1 - \alpha_j}{1 - \alpha_j - \gamma}}\right)^{1 - \beta} \\
&= \frac{\rho\left({\bf p}\right)^{1 - \beta}}{1 - \beta}\left(\frac{\gamma}{1 - \alpha_k} - 1\right)^{1 - \beta}.
\end{align*}
The first inequality uses the sub-additivity of the function $f(x) = x^{1 - \beta}$, and we let $k$ be such that $\alpha_k = \min_j \alpha_j$.

We now suppose that $\beta > 1$ and find that for each user $j$,
\begin{equation*}
\nu_j^{1 - \beta}\left(\frac{\gamma}{c_j}\right)^{\frac{\gamma(1 - \beta)}{1 - \alpha_j - \gamma}}r_j^{\frac{(1 - \alpha_j)(1 - \beta)}{1 - \alpha_j - \gamma}} \leq F_\beta({\bf p})(1 - \beta),
\end{equation*}
which is equivalent to
\begin{equation*}
\left(\frac{\gamma}{c_j}\right)^{\frac{\gamma}{1 - \alpha_j - \gamma}}r_j^{\frac{1 - \alpha_j}{1 - \alpha_j - \gamma}} \geq \nu_j^{-1}\big(F_\beta({\bf p})(1 - \beta)\big)^{\frac{1}{1 - \beta}}.
\end{equation*}
Adding these lower bounds and using the definition of $\nu_j$, we find that
\begin{equation*}
\rho({\bf p}) \geq \big(F_\beta({\bf p})(1 - \beta)\big)^{\frac{1}{1 - \beta}}\sum_{j = 1}^n \frac{1 - \alpha_j}{\gamma + \alpha_j - 1}.
\end{equation*}
\end{proof}

\section{Fairness Properties}\label{sec:properties}

While the fairness functions (\ref{eq:betalambda}) are the unique functions satisfying certain desirable fairness axioms, other desirable fairness properties are not axiomatically satisfied by these functions. In this appendix, we consider two such properties: envy-freeness and Pareto-efficnecy.
\begin{defn}
{\bf Envy-freeness} holds if and only if no user envies another user's allocation.  Mathematically, let $r_{ij}$ denote the amount of resource $i$ received by user $j$.  User $j$ can then process $\max_i r_{ij}/R_{ij}$ number of jobs.  Envy-freeness is def\hspace{0.1mm}ined as the property that $\max_i r_{ij}/R_{ij} > \max_i r_{ik}/R_{ij}$ for any $j\neq k$. In words, no other user's allocation would enable a user to process more jobs than her allocation would.
\end{defn}
\begin{defn}
A function $f$ is {\bf Pareto-eff\hspace{0.1mm}icient} if, whenever ${\mathbf x}$ Pareto-dominates ${\mathbf y}$ (i.e., $x_i \geq y_i$ for each index $i$ and $x_j > y_j$ for some $j$), $f({\mathbf x}) > f({\mathbf y})$.
\end{defn}

We first show that envy-freeness always holds for an allocation of utilities if users optimize the number of jobs that they submit under bundled or resource pricing:
\begin{thm}
Suppose that each user chooses a number of jobs to submit, $x_j^\star(r_j\left({\bf p}\right))$, so as to maximize her utility function. Then for any given set of resource or bundled prices ${\bf p}$, if two users switch their corresponding resource allocations and are required to pay the difference in resources received, neither gains any utility. If users do not need to pay this difference, then a user may gain utility by switching allocations.
\end{thm}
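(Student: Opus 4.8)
The plan is to handle both claims through a single revealed-preference principle: by hypothesis $x_j^\star$ maximizes user $j$'s payoff $U_j(x)-r_j x^\gamma$ (cf.\ (\ref{eq:user_max})) at the prevailing prices, so any alternative number of jobs run at cost no smaller than $r_j x^\gamma$ cannot improve $j$. Let $r_{ij}$ denote the amount of resource $i$ that user $j$ receives at the optimum; under resource pricing $r_{ij}=R_{ij}x_j^\star$, and under bundled pricing $r_{ij}=\mu_j x_j^\star b_i$.

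For the first claim, suppose users $j$ and $k$ exchange allocations, so that $j$ now holds the vector $\{r_{ik}\}_i$. Because the resources are non-substitutable, $j$ can complete at most $\hat x_j=\min_i r_{ik}/R_{ij}$ jobs with this vector, and ``paying the difference in resources received'' means $j$'s charge is recomputed as the cost of the vector $\{r_{ik}\}_i$ at the given prices. First I would show this charge is at least the cost $r_j\hat x_j^\gamma$ of the resources $j$ actually consumes: since $R_{ij}\hat x_j\le r_{ik}$ for every $i$ and $\gamma>0$, we have $(R_{ij}\hat x_j)^\gamma\le r_{ik}^\gamma$, and summing against the prices gives $r_j\hat x_j^\gamma=\sum_i p_i (R_{ij}\hat x_j)^\gamma\le\sum_i p_i r_{ik}^\gamma$. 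Hence $j$'s post-swap payoff is at most $U_j(\hat x_j)-r_j\hat x_j^\gamma$, which by optimality of $x_j^\star$ is at most $U_j(x_j^\star)-r_j(x_j^\star)^\gamma$, the original payoff. Running the symmetric argument for $k$ shows neither user gains. For bundled pricing the argument is identical but tighter: $j$ consumes exactly the bundles received ($\mu_j\hat x_j=\mu_k x_k^\star$), so the waste inequality holds with equality.

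For the second claim it suffices to exhibit one instance in which a free swap strictly helps. I would take two users $j,k$ with identical per-job requirements $R_{ij}=R_{ik}$ for all $i$ but distinct utility functions (e.g.\ isoelastic utilities (\ref{eq:alphafair}) with different $\alpha_j$ or $c_j$) chosen so that $x_k^\star>x_j^\star$. Then $j$ processes exactly $\hat x_j=x_k^\star$ jobs on $k$'s allocation with no waste, and since $j$ does not pay the difference its cost remains $r_j(x_j^\star)^\gamma$; as $U_j$ is strictly increasing, the payoff rises from $U_j(x_j^\star)-r_j(x_j^\star)^\gamma$ to $U_j(x_k^\star)-r_j(x_j^\star)^\gamma$, a strict gain.

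The main obstacle is the bookkeeping in the first part under resource pricing, where after the swap $j$ generically wastes some of each non-dominant resource; the crux is the inequality $\sum_i p_i r_{ik}^\gamma\ge r_j\hat x_j^\gamma$, which guarantees that charging $j$ for the full swapped vector never undercuts the cost of the jobs actually run, so that optimality of $x_j^\star$ applies. A secondary subtlety is fixing the precise meaning of ``pay the difference''; I would state explicitly that it means recomputing $j$'s charge for the resource vector actually received, which is exactly what makes the comparison with the optimal choice exact.
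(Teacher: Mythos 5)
Your proposal is correct and follows essentially the same route as the paper's proof: a revealed-preference argument (optimality of $x_j^\star$ at the fixed prices) combined with the observation that, when the swapped user must pay for the full resource vector received, the charge is at least the cost of the jobs actually runnable, plus an explicit two-user example for the free-swap case. Your write-up is in fact more rigorous than the paper's (which dismisses the first part as ``trivial''), making the waste inequality $\sum_i p_i r_{ik}^\gamma \geq r_j \hat{x}_j^\gamma$ explicit and correctly using $\min_i r_{ik}/R_{ij}$ for the number of processable jobs under non-substitutable resources.
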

\begin{proof}
The first part of the proposition is trivial, as for a given set of prices, users are assumed to optimize the number of jobs processed. Since all users are charged at the same per-resource or per-bundle rates, if switching resource allocations alters the number of jobs processed for either, neither gains any utility. If switching allocations results in each user being able to process the same number of jobs, then the users have either not changed their resource allocations or have gained resources; in the latter case, their utility decreases due to paying for extra resources that they do not use to process jobs.

If users are not required to pay the difference in prices from switching resource allocations, then one user may gain utility by switching resource allocations with another: consider, for instance, a user who processes a very small amount of jobs, due to a low willingness to pay (e.g., a high value of $\alpha_j$ and low value of $c_j$ in (\ref{eq:alphafair})'s utility functions). Then this user will process a higher number of jobs by switching resources allocations with another user who is more willing to pay for resources in exchange for processing jobs.
\end{proof}
We next find that the fairness function (\ref{eq:betalambda}) is Pareto-efficient if $\beta > 0$ and $\left|\lambda\right| \geq 1/\beta - 1$ \cite{joe2012multi,lan2010axiomatic}. Thus, if the allocation $\overline{U}_j\left(x_j^\star, {\bf p}\right)$ Pareto-dominates the allocation $\overline{U}_j\left(x_j^\star, {\bf q}\right)$, then the fairness function (\ref{eq:betafairness}), for which $\lambda = 1/\beta - 1$, attains a larger value under the prices ${\bf p}$ than it does under the prices ${\bf q}$. The prices here may represent those under any pricing plan; bundled, resource, or differentiated. If Lemma \ref{lem:const} holds and the utility received from each user $j$ is equal to a price-independent constant multiplied by the amount paid by user $j$, then the revenue (\ref{eq:revenue}) is also Pareto-efficient, as it is an affine transformation of the sum of the utilities received. This latter quantity corresponds to the fairness (\ref{eq:betalambda}) as $\lambda\rightarrow\infty$.

\end{document}